\newcommand{\sfrac}[2]{{\textstyle\frac{#1}{#2}}}
\newcommand{\ba}{\begin{array}}
\newcommand{\ea}{\end{array}}
\newcommand{\be}{\begin{equation}}
\newcommand{\ee}{\end{equation}}
\newcommand{\bea}{\begin{eqnarray}}
\newcommand{\eea}{\end{eqnarray}}
\newcommand{\Hgencoul}{{\mathcal{H}}_\gamma}
\newcommand{\cal}{\mathcal}
\newcommand{\A}{\mathcal A}
\newcommand{\mL}{\mathcal L}
\def \C {\mathbb{C}}
\def \R {\mathbb{R}}
\def \N {\mathbb{N}}
\def \H_g so(N){H_g^{so(N)}}
\def \Hggl(N){H_g^{gl(N)}}
\newcommand{\nad}[2]{\genfrac{}{}{0pt}{}{#1}{#2}}
\def\beq#1#2\eeq{%
        \begin{equation}%
        \label{#1}%
            #2%
        \end{equation}%
    }
\newtheorem{thm}{Theorem}[section]
\newtheorem*{thm*}{Theorem}
\newtheorem{cor}[thm]{Corollary}
\newtheorem{lem}[thm]{Lemma}
\newtheorem{prop}[thm]{Proposition}
\newtheorem{rem}[thm]{Remark}
\title{Algebra of Dunkl Laplace--Runge--Lenz vector}
\author{Misha Feigin}
\address{School of Mathematics and Statistics, University of Glasgow, 15 University Gardens, Glasgow G12 8QW, UK}
\email{misha.feigin@glasgow.ac.uk}
\author{Tigran Hakobyan}
\address{Yerevan State University, 1 Alex Manoogian St., Yerevan, 0025, Armenia}
\email{tigran.hakobyan@ysu.am}
\begin{document}

\begin{abstract}
We consider  Dunkl version of Laplace--Runge--Lenz vector associated with a finite Coxeter group $W$ acting geometrically in $\mathbb R^N$ with multiplicity function  $g$. This vector 
generalizes the usual Laplace--Runge--Lenz vector and its components commute  with Dunkl--Coulomb Hamiltonian given as Dunkl Laplacian with additional Coulomb potential $\gamma/r$. We study resulting symmetry algebra $R_{g, \gamma}(W)$ and show that it has Poincar\'e--Birkhoff--Witt property. In the absence of Coulomb potential  this symmetry algebra $R_{g,0}(W)$ is a subalgebra of the rational Cherednik algebra $H_g(W)$. We show that a central quotient of the algebra  $R_{g, \gamma}(W)$ is a quadratic algebras isomorphic to a central quotient of the corresponding Dunkl angular momenta algebra $H_g^{so(N+1)}(W)$. This gives interpretation of the algebra  $H_g^{so(N+1)}(W)$ as the hidden symmetry algebra of Dunkl--Coulomb problem in $\R^N$. By specialising $R_{g, \gamma}(W)$ to $g=0$ we recover a quotient of the universal enveloping algebra
$U(so(N+1))$ as the hidden symmetry algebra of Coulomb problem in ${\mathbb R}^N$.
We also apply Dunkl  Laplace--Runge--Lenz vector  to establish maximal superintegrability of generalised Calogero--Moser systems.
\end{abstract}

\maketitle

\section{Introduction}

Discovery of classical version of Laplace--Runge--Lenz (LRL) vector goes back to 1710 letter of Johann Bernoulli and an earlier letter by J. Hermann the same year (see \cite{Gold} and references therein). Quantum version of the LRL vector was used by Pauli in 1926 for the derivation of the spectrum of the hydrogen atom \cite{Pauli}. Pauli also found out relations between components of LRL vector, the Hamiltonian of hydrogen atom (equivalently, Coulomb problem Hamiltonian), and angular momenta operators. A. Hulth\'en (with  acknowledgment to O. Klein) pointed out a connection to orthogonal
(or rather Lorentz) group in four-dimensional space in 1933 \cite{Hul}. Indeed, it is clear from these relations that conserved quantities at a fixed level of energy satisfy $so(4)$ relations, which is also called the hidden symmetry algebra of Coulomb problem as it extends more straightforward $so(3)$ symmetry algebra generated by angular momenta.  In the case of Coulomb problem in $N$-dimensionl space the  hidden symmetry algebra is $so(N+1)$ \cite{revai65}. Another explanation of the hidden rotational symmetry of Coulomb problem based  on momentum space representation of wave functions was presented in \cite{Fock}.



Dunkl operators arose in the theory of generalised harmonic polynomials assocaited with a root systems $\mathcal R$ of a finite reflection groups $W$ and $W$-invariant function $g: {\mathcal R} \to \C$ \cite{dunkl}. These operators give a deformation of partial derivatives with non-local additional terms which vanish if $g=0$. Dunkl operators pairwise commute and the sum of squares of these operators corresponding to an orthonormal basis gives Dunkl Laplacian which is a deformation of the usual Laplace operator \cite{dunkl0, dunkl-book}.

Dunkl operators are also key ingredients of the rational Cherednik algebras \cite{EG}, and they are closely related to Calogero--Moser integrable systems.
The  Calogero--Moser system describes interacting
particles on a line with a pairwise inverse  square distance potential
\cite{calogero0}. Liouville integrability of the classical system was established by the Lax method in \cite{moser}.
 Calogero--Moser system has integrable generalizations
related to root systems $\mathcal R$ of finite reflection groups $W$  \cite{rev-olsh}. Integrability of these systems can be established with the help of Dunkl operators.
Indeed, the restriction of Dunkl Laplacian to the space of $W$-invariant functions gives generalised Calogero--Moser Hamiltonian associated with $\mathcal R$. Moreover, integrals of motions can also be expressed as restrictions of $W$-invariant combinations of Dunkl operators \cite{Heckman} (see also \cite{poly92} for the usual Calogero--Moser system in harmonic confinement).


In addition to Liouville integrability the Calogero--Moser Hamiltonian has extra integrals of motion
ensuring its maximal superintegrability. This was established for the classical Calogero--Moser system in  \cite{woj83}. 
Maximal superintegrability of a classical integrable system leads to the property that compact trajectories are closed. In the  quantum case superintegrability corresponds to degeneration of spectrum. 
Maximal superintegrability of quantum Calogero--Moser system was shown in \cite{K}, where additional integrals were constructed with the help of Dunkl operators (see also \cite{gonera} for another proof).  Superintegrability for the quantum system was claimed in \cite{CFS} for any root system and additional integrals were given. Algebraic structures formed by  Calogero--Moser operators together with their integrals were investigated in \cite{K}, \cite{CFS} (see also \cite{HUW} where a closely related quadratic algebra was considered). 
Superintegrability of spin Calogero--Moser systems was investigated in \cite{R1}, \cite{R2}.

A related algebraic structure is the Dunkl angular momenta algebra $H_g^{so(N)}(W)$ \cite{fh}. This algebra is generated by Dunkl angular momenta operators and the group $W$. It can be thought of as a flat deformation of skew product of a quotient of the universal enveloping algebra $U(so(N))$ and $W$. The central element of this algebra
acts naturally on $W$-invariant functions as the  angular part of the corresponding Calogero--Moser Hamiltonian.
Central quotient of the algebra $H_g^{so(N)}(W)$ can be identified with the algebra of global sections of a sheaf of Cherednik algebras on a smooth quadric \cite{FT}. 
 Dunkl angular momenta also lead to various symmetries of a Dirac operator in the Clifford extension of the rational Cherednik algebra studied in \cite{dirac}.

Dunkl--Coulomb Hamiltonian $H_{g, \gamma}$ is the non-local operator given as Dunkl Laplacian with additional Coulomb potential $\gamma/r$, it depends on the coupling parameter $g$ and the Coulomb force parameter $\gamma$. Such two-dimensional Hamiltonian  was considered in \cite{vinet} for the root system $A_1\times A_1$, where a version of LRL vector for the corresponding Hamiltonian was presented. In the $N$-dimensional case and the root system $A_{N-1}$ the Dunkl LRL vector was  introduced in  \cite{Runge}  (see also \cite{CalCoul} for the initial attempt and for 
discussion of superintegrability of classical generalised Calogero--Moser Hamiltonians with Coulomb potential).   It commutes with the Dunkl--Coulomb Hamiltonian $H_{g, \gamma}$. If the coupling parameter $g=0$ then the Dunkl LRL vector reduces to the usual LRL vector. On the other hand if $\gamma=0$ then one gets a version of the LRL vector for the  Dunkl Laplacian. Dunkl LRL vector for type $A$ Dunkl--Coulomb problem on the sphere was considered in \cite{francisco}.
For 
certain integrable perturbations, which break $H_g^{so(N)}(W)$ symmetry  along a
particular direction, the corresponding component of the LRL vector is 
preserved \cite{HNer}.

In the present paper we introduce Dunkl LRL vector for the Dunkl--Coulomb Hamiltonian $H_{g, \gamma}$  associated with the
root system $\mathcal  R$ of an arbitrary finite Coxeter group $W$. Components of this vector  commute with the Dunkl--Coulomb Hamiltonian. Another set of operators commuting with the Hamiltonian is given by Dunkl angular momenta. 
This gives a new way to prove  maximal superintegrability of quantum generalised Calogero--Moser system related with a root system $\mathcal R$. Furthermore, it leads to 
additional quantum integrals 
of  generalised Calogero--Moser systems which do not have full Coxeter symmetry and are related to special representations of rational Cherednik algebras \cite{Fsel}, \cite{CFV}, \cite{SV}. Components of Dunkl LRL vector, Dunkl angular momenta, Dunkl--Coulomb Hamiltonian and elements of $W$ generate algebra $R_{g,\gamma}(W)$ which may be thought of as the symmetry algebra of the Hamiltonian $H_{g, \gamma}$. It is the main object of the present work.

After reviewing background information in Section \ref{sec2} we establish relations 
involving Dunkl LRL vector
in Section \ref{section-AR}. In Section \ref{section-BR} we establish defining relations of the algebra $R_{g, \gamma}(W)$ and find its basis.   Then we consider a central quotient of $R_{g,\gamma}(W)$ and show that it is isomorphic to a central quotient of the Dunkl angular momenta algebra $H_g^{so(N+1)}(W)$ where the group $W$ acts naturally on the first $N$ components of vectors in $\C^{N+1}$. These central quotients are (non-homogeoneous) quadratic PBW algebras in the sense of \cite{BG}. 
This isomorphism gives an interpretation of the Dunkl angular momenta algebra $H_g^{so(N+1)}(W)$ as the hidden symmetry algebra of the Dunkl--Coulomb problem in $\R^N$ similarly to (the quotient of the universal enveloping algebra of)  $so(N+1)$ being the hidden symmetry algebra of Coulomb problem in $\R^N$. The latter property can also be deduced from our considerations by specialising the algebra $R_{g, \gamma}(W)$ at $g=0$ so that Dunkl--Coulomb Hamiltonian $H_{g, \gamma}$ takes the form of the usual Coulomb Hamiltonian in $\R^N$. At this specialisation the central quotient of $R_{0, \gamma}(W)$ becomes isomorphic to the skew product of a quotient of the universal enveloping algebra $U(so(N+1))$ and Coxeter group $W$.

Of particular interest is the case $\gamma=0$ when Coulomb potential is absent since the algebra $R_{g, 0}(W)$ is a subalgebra of the corresponding rational Cherednik algebra $H_g(W)$. 
Even though $H_g^{so(N+1)}(W)$ is naturally a subalgebra of the rational Cherednik algebra with group $W$ acting in $(N+1)$-dimensional vector space  its central quotient appears to be isomorphic to a central quotient of the subalgebra $R_{g, 0}(W)$ in the rational Cherednik algebra $H_g(W)$ with group $W$ acting in the space $\C^N$. 

In Section \ref{superint} we apply   developed Dunkl LRL vector and algebra $R_{g,\gamma}(W)$, and we establish maximal superintegrability of  Calogero--Moser systems for any root system $\mathcal R$ with additional (possibly, vanishing) Coulomb potential. We also establish maximal superintegrability of generalisations of such systems which do not have full Coxeter symmetry. 

\section{Dunkl operators and their properties}
\label{sec2}

Let ${\cal R}$ be a Coxeter root system in $N$-dimensional Euclidean space $\R^N$ \cite{Hum}.
Let $V\cong \C^N$ be the complexification of this vector space 
 with the bilinear inner product denoted by $(\cdot, \cdot)$. The corresponding finite Coxeter group $W$ is generated by orthogonal reflections
\begin{equation*}
s_\alpha (x)= x- \frac{2(\alpha, x)}{(\alpha,\alpha)}\alpha,
\quad
\alpha\in \mathcal{R}, \,  x \in V.
\end{equation*}
Root system $\cal R$ can be represented as disjoint union ${\cal R}={\cal R}_+ \cup {\cal R}_-$, where ${\cal R}_+$ is a system of non-collinear positive roots and ${\cal R}_- = -{\cal R}_+$. Root system $\cal R$ satisfies invariance $s_\alpha {\cal R} = {\cal R}$   $\forall \alpha \in \cal R$.

Consider a  multiplicity function $g\colon \cal R \to \C$. Let $g_\alpha = g(\alpha)$ for $\alpha \in \cal R$. We assume that $g$ is $W$-invariant, that is
\begin{equation*}
\label{g-alpha}
g_{w(\alpha)}=g_\alpha
\end{equation*}
for any $w \in W$, $\alpha \in \cal R$.

The Dunkl operator $\nabla_\xi$ for any  $\xi\in V$ is defined by  \cite{dunkl}
\begin{equation*}
\label{dunkl}
\nabla_\xi = \partial_\xi - \sum_{\alpha \in {\mathcal R}_+} \frac{g_\alpha (\alpha,\xi)}{(\alpha,x)}s_\alpha,
\end{equation*}
where  $\partial_\xi = (\xi, \partial)$ is partial derivative in the direction $\xi$ and reflections $s_\alpha$ act on functions $\psi(x)$ in a standard way,
\begin{equation*}
\label{s-psi}
s_\alpha \psi(x)= \psi(s_\alpha(x) ), \quad x\in V.
\end{equation*}
The Dunkl operators satisfy commutativity  \cite{dunkl}  $[\nabla_\xi, \nabla_\eta]=0$, and their commutators with linear functions produce
nonlocal exchange operators given as follows:
\be
\label{com-xieta}
\big[\nabla_\xi, (x, \eta)\big] 
 = (\xi, \eta)
+ \sum_{\alpha\in {\mathcal R}_+} \frac{2 g_\alpha (\alpha, \xi)(\alpha, \eta)}{(\alpha,\alpha)} s_\alpha,
\ee
where $\xi, \eta \in V$.
These relations reduce to the Weyl algebra commutation relations in the zero coupling limit, $g_\alpha=0$ for all $\alpha \in \cal R$.
For general coupling values they lead to rational Cherednik algebra $H_g(W)$ which is a deformation of the skew product of Weyl algebra  with  the  Coxeter group $W$ \cite{EG}.

It is also convenient to consider $N$ Dunkl operators corresponding to an orthonormal  basis $e_1, \ldots, e_N$ in $V$. For $x\in V$ and $\alpha \in \mathcal R$ we let
$$
x = \sum_{i=1}^N x_i e_i,  \quad \alpha = \sum_{i=1}^N \alpha_i e_i,
$$
where $x_i, \alpha_i \in \C$, 
and we denote $x_\alpha=(x,\alpha) = \sum_{i=1}^N \alpha_i x_i $.
We have
\begin{equation*}
\nabla_i =\nabla_{e_i}=\partial_i - \sum_{\alpha \in {\mathcal R}_+} \frac{g_\alpha \alpha_i}{x_\alpha}s_\alpha,
\end{equation*}
where
$
\partial_i = \frac{\partial}{\partial x_i}$, $i=1, \ldots, N$. 
In these notations, the commutation relations \eqref{com-xieta} take the form
 \be
\label{com-nn}
[\nabla_i,x_j]= S_{ij},
\ee
where
\be
\label{Sij}
S_{ij} =\delta_{ij} + \sum_{\alpha\in {\mathcal R}_+} \frac{2 g_\alpha \alpha_i\alpha_j}{(\alpha, \alpha)} s_\alpha,
\ee
$i, j=1, \ldots, N$, and $\delta_{ij}$ is the Kronecker symbol.  
Note that relations \eqref{com-nn} remain invariant  under the formal
Hermitian conjugation,
\be
\label{herm}
\nabla_i^+=-\nabla_i,
\qquad
x_i^+=x_i,
\qquad
S_{ij}^+=S_{ij}.
\ee

\begin{lem}
The commutators of the elements $S_{ij}$ with coordinates $x_k$  satisfy relations
\be
\label{comSx}
[S_{ij},  x_k] = [S_{kj}, x_i]
\ee
for all $i,j,k=1,\ldots, N$.
\end{lem}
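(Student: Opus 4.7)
My plan is to reduce the identity to a manifestly symmetric expression using the explicit form \eqref{Sij} of $S_{ij}$. The scalar $\delta_{ij}$ part is central and contributes nothing to $[S_{ij}, x_k]$, so only the sum of reflections matters. For a single reflection I would use that $s_\alpha$ as an operator on functions satisfies $s_\alpha x_k = (s_\alpha(x))_k\, s_\alpha$, and since $(s_\alpha(x))_k = x_k - \frac{2\alpha_k}{(\alpha,\alpha)} x_\alpha$, this gives the one-line identity
\[ [s_\alpha, x_k] = -\frac{2\alpha_k x_\alpha}{(\alpha,\alpha)}\, s_\alpha. \]
Multiplying by $\frac{2 g_\alpha \alpha_i \alpha_j}{(\alpha,\alpha)}$ and summing over $\alpha \in \mathcal{R}_+$ yields
\[ [S_{ij}, x_k] = -\sum_{\alpha \in \mathcal{R}_+} \frac{4 g_\alpha \alpha_i \alpha_j \alpha_k\, x_\alpha}{(\alpha,\alpha)^2}\, s_\alpha, \]
which is manifestly symmetric in all three indices $i, j, k$; in particular its value is unchanged under the swap $i \leftrightarrow k$, which is precisely \eqref{comSx}.

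A more conceptual alternative I would mention is a Jacobi identity argument. The defining formula \eqref{Sij} makes the symmetry $S_{ij} = S_{ji}$ obvious, so \eqref{com-nn} can equally be read as $[\nabla_j, x_i] = S_{ij}$. Applying the Jacobi identity to $\nabla_j, x_i, x_k$ and using $[x_i, x_k] = 0$ gives $[[\nabla_j, x_i], x_k] = [[\nabla_j, x_k], x_i]$, which translates directly to $[S_{ij}, x_k] = [S_{kj}, x_i]$.

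There is really no obstacle in either approach; the content of the lemma is the hidden symmetry $S_{ij} = S_{ji}$ combined with one elementary commutator. I would present the direct computation since the resulting explicit formula for $[S_{ij}, x_k]$, totally symmetric in the three indices, is likely to be a useful by-product in the sections that follow.
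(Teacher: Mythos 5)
Your main argument is correct and is essentially identical to the paper's proof: the paper likewise computes $[x_k, S_{ij}] = \sum_{\alpha\in\mathcal{R}_+}\frac{4g_\alpha\alpha_i\alpha_j\alpha_k}{(\alpha,\alpha)^2}x_\alpha s_\alpha$ from the action of $s_\alpha$ on $x_k$ and concludes by the manifest symmetry in $i,j,k$ (your expression differs only by the sign coming from the opposite ordering of the commutator). Your Jacobi-identity alternative, using $[x_i,x_k]=0$ and the symmetry $S_{ij}=S_{ji}$ visible from \eqref{Sij}, is also valid and slightly more conceptual, though it is not the route the paper takes.
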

\begin{proof}
For any $\alpha \in \mathcal R$ we have
\begin{equation}
\label{sij-mat}
s_\alpha (x_k)=x_k -\frac{2x_\alpha \alpha_k}{(\alpha, \alpha)},
\end{equation}
and therefore
\begin{equation*}
[x_k, s_\alpha]=\frac{2x_\alpha\alpha_k}{(\alpha, \alpha)}s_\alpha.
\end{equation*}
Hence we obtain from \eqref{Sij} that
\be
\label{xk-Sij}
[x_k,S_{ij}]=\sum_{\alpha\in \mathcal{R}_+}\frac{4g_\alpha\alpha_i\alpha_j\alpha_k }{(\alpha, \alpha)^2} x_\alpha s_\alpha.
\ee
The right-hand side of the equality \eqref{xk-Sij} is symmetric in the three indecies $i$, $j$ and $k$, which implies the required relation \eqref{comSx}.
\end{proof}

Let us consider the following element $S$ of the Coxeter group algebra $\C W$:
\be
\label{Sinv}
S=-\sum_{\alpha \in {\mathcal R}_+} g_\alpha s_\alpha.
\ee
\begin{lem}
The element $S$ satisfies the following relations:
\be
\label{Scomsa}
[S, s_\alpha]=0
\ee
for any $\alpha \in \cal R$, and
\be
\label{Siisum}
\sum_{i=1}^N S_{ii}=N-2S.
\ee
\end{lem}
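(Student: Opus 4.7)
The two relations are independent so I would prove them separately, and both should be short direct computations.

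For \eqref{Siisum}, the plan is just to expand the diagonal entries using definition \eqref{Sij}. The Kronecker part contributes $\sum_{i=1}^{N} \delta_{ii} = N$. For the reflection part, interchanging the finite sums gives
\[
\sum_{i=1}^{N}\sum_{\alpha\in\mathcal{R}_+}\frac{2g_\alpha \alpha_i^2}{(\alpha,\alpha)}s_\alpha
= \sum_{\alpha\in\mathcal{R}_+}\frac{2g_\alpha}{(\alpha,\alpha)}\Bigl(\sum_{i=1}^N \alpha_i^2\Bigr)s_\alpha
= 2\sum_{\alpha\in\mathcal{R}_+} g_\alpha s_\alpha,
\]
since $\sum_i \alpha_i^2 = (\alpha,\alpha)$ in the orthonormal basis. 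By definition \eqref{Sinv} the right-hand side equals $-2S$, which gives \eqref{Siisum}. No obstacle here, it is a one-line identity.

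For \eqref{Scomsa}, the idea is to exhibit $S$ as a $W$-invariant element of $\mathbb{C}W$. Since $s_\alpha = s_{-\alpha}$ for every root and $W$-invariance of $g$ combined with $s_\alpha(\alpha) = -\alpha$ gives $g_{-\alpha} = g_\alpha$, I would first rewrite
\[
S = -\tfrac12 \sum_{\alpha \in \mathcal{R}} g_\alpha s_\alpha.
\]
Then, using the standard conjugation identity $s_\beta s_\alpha s_\beta^{-1} = s_{s_\beta(\alpha)}$ for $\alpha,\beta \in \mathcal{R}$ together with the $W$-invariance $g_{s_\beta(\alpha)} = g_\alpha$, I obtain
\[
s_\beta S s_\beta^{-1} = -\tfrac12 \sum_{\alpha \in \mathcal{R}} g_\alpha s_{s_\beta(\alpha)}
= -\tfrac12 \sum_{\alpha' \in \mathcal{R}} g_{\alpha'} s_{\alpha'} = S,
\]
where I reindexed via $\alpha' = s_\beta(\alpha)$, which is a bijection of $\mathcal{R}$ onto itself since the root system is $W$-invariant. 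Multiplying by $s_\beta$ on the right yields \eqref{Scomsa}.

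The only mild subtlety, which is worth pointing out but not really an obstacle, is the passage from the sum over $\mathcal{R}_+$ to the symmetric sum over $\mathcal{R}$: without it, conjugation by $s_\beta$ can send a positive root to a negative one, and one would have to track an extra sign flip combined with $s_{-\alpha} = s_\alpha$ to close the argument. The symmetric presentation avoids this bookkeeping.
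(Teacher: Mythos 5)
Your proof is correct and follows essentially the same route as the paper: the trace identity is the direct computation from the definition of $S_{ij}$ using $\sum_i\alpha_i^2=(\alpha,\alpha)$, and the commutation relation comes from $W$-invariance of $g$ together with $w s_\alpha w^{-1}=s_{w(\alpha)}$. The paper states both steps without detail; your symmetrization of the sum over $\mathcal R$ is a clean way to handle the sign bookkeeping that the paper leaves implicit.
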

\begin{proof}
The relation \eqref{Scomsa} follows from $W$-invariance of the multiplicity function \eqref{g-alpha} and the property
\begin{equation*}
w s_\alpha w^{-1}=s_{w(\alpha)},
\end{equation*}
which is valid for any $w\in W$. The relation  \eqref{Siisum} follows directly from \eqref{Sij}.
\end{proof}

The relations \eqref{sij-mat} and  \eqref{Sinv}  imply
\be
\label{x-pi}
(x, \nabla)  = \sum_{k=1}^N x_k \nabla_k =  r\partial_r +S,
\qquad
(\nabla,  x) = \sum_{k=1}^N  \nabla_k x_k =  r\partial_r - S + N,
\ee
where $r=\big(\sum_{k =1}^N x_k^2\big)^{1/2}$ and
 $\partial_r= \frac{1}{r} \sum_{k=1}^N x_k \partial_k$.

\begin{lem}\label{lem1}
The following relations take place for any $i=1,\ldots, N$:
\begin{subequations}
\be
\label{xjSij-a}
\sum_{j=1}^N x_jS_{ij}= x_i + [S, x_i],
\qquad
\sum_{j=1}^N S_{ij}x_j= x_i -  [S, x_i];
\ee
\be
\label{xjSij-b}
\sum_{j=1}^N \nabla_jS_{ij}= \nabla_i + [S, \nabla_i],
\qquad
\sum_{j=1}^N S_{ij}\nabla_j= \nabla_i -  [S, \nabla_i].
\ee
\end{subequations}
\end{lem}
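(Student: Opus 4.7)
The plan is to verify each of the four identities by direct expansion, using the explicit formula \eqref{Sij} for $S_{ij}$, the definition \eqref{Sinv} of $S$, the reflection action \eqref{sij-mat}, and the standard $W$-equivariance $s_\alpha \nabla_\xi s_\alpha = \nabla_{s_\alpha(\xi)}$ of Dunkl operators (which follows from their definition together with the $W$-invariance of $g$).

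For \eqref{xjSij-a}, I will first compute $[S, x_i]$. From \eqref{sij-mat} one obtains $[s_\alpha, x_i] = -\frac{2 x_\alpha \alpha_i}{(\alpha, \alpha)} s_\alpha$, whence
\[
[S, x_i] = \sum_{\alpha \in \mathcal{R}_+} \frac{2 g_\alpha \alpha_i x_\alpha}{(\alpha, \alpha)} s_\alpha.
\]
Expanding $\sum_j x_j S_{ij}$ via \eqref{Sij}, the diagonal term contributes $x_i$, and the contraction $\sum_j x_j \alpha_j = x_\alpha$ in the reflection term reproduces exactly $[S, x_i]$, giving the first equality. For the second, in $\sum_j S_{ij} x_j$ I must first commute each $s_\alpha$ past $x_j$ using \eqref{sij-mat}; the contraction with $\alpha_j$ then replaces $x_\alpha$ by $x_\alpha - 2 x_\alpha = -x_\alpha$, which accounts for the opposite sign.

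The identities \eqref{xjSij-b} follow by the same template. Linearity of $\nabla_\xi$ in $\xi$ gives $\sum_j \alpha_j \nabla_j = \nabla_\alpha$ and yields $\sum_j \nabla_j S_{ij}$ immediately. For $\sum_j S_{ij} \nabla_j$, the $W$-equivariance produces $s_\alpha \nabla_j = \bigl(\nabla_j - \tfrac{2 \alpha_j}{(\alpha, \alpha)} \nabla_\alpha\bigr) s_\alpha$, so that $\sum_j \alpha_j s_\alpha \nabla_j = -\nabla_\alpha s_\alpha$, delivering the required sign flip; computing $[S, \nabla_i] = \sum_{\alpha \in \mathcal{R}_+} \frac{2 g_\alpha \alpha_i}{(\alpha, \alpha)} \nabla_\alpha s_\alpha$ by the same token matches both sides.

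There is no substantive obstacle — the argument is elementary bookkeeping. The one point requiring mild care is that, unlike coordinates, Dunkl operators do not transform under reflections via a naive analogue of \eqref{sij-mat}, so one must invoke $W$-equivariance of $\nabla_\xi$ from the start in part (b). Alternatively, one could deduce the second identity in each line from the first by applying the Hermitian conjugation \eqref{herm} (for real-valued multiplicity $g$, in which case $S^+ = S$), thereby halving the computation.
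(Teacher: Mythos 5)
Your proof is correct and takes essentially the same route as the paper: expand $\sum_j x_j S_{ij}$ (resp.\ $\sum_j \nabla_j S_{ij}$) via \eqref{Sij} and match the reflection term against $[S,x_i]$ (resp.\ $[S,\nabla_i]$) computed from \eqref{Sinv}. The paper only writes out the first identity and dismisses the rest as ``similar''; you correctly supply the sign flip from commuting $s_\alpha$ past $x_j$ or $\nabla_j$, including the $W$-equivariance of Dunkl operators needed for \eqref{xjSij-b}.
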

\begin{proof}
It follows from \eqref{Sij} that
$$
\sum_{j=1}^N x_j S_{ij} = x_i + \sum_{\alpha\in \cal R_+} \frac{2 g_\alpha x_\alpha \alpha_i}{(\alpha,\alpha)}s_\alpha.
$$
We also have
$$
[g_\alpha s_\alpha, x_i] = -\frac{2 g_\alpha x_\alpha \alpha_i}{(\alpha,\alpha)} s_\alpha,
$$
which implies the first relation in \eqref{xjSij-a}. Other relations can be checked similarly.
\end{proof}
As a corollary of Lemma \ref{lem1} we get the following statement.
\begin{lem}\label{lem1.5}
The following (anti-)commutation relations take place for any $i=1,\ldots, N$:
\begin{subequations}
\be
\label{xjSij-a2}
\sum_{j=1}^N \{x_j, S_{ij}\}=2x_i,
\qquad
\sum_{j=1}^N [x_j, S_{ij}]=2[S, x_i];
\ee
\be
\label{xjSij-b2}
\sum_{j=1}^N \{\nabla_j, S_{ij}\}=2\nabla_i,
\qquad
\sum_{j=1}^N [\nabla_j, S_{ij}]=2[S, \nabla_i].
\ee
\end{subequations}
\end{lem}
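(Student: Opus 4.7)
The plan is to obtain the four identities of Lemma \ref{lem1.5} as immediate algebraic consequences of the two pairs of identities established in Lemma \ref{lem1}, simply by adding and subtracting the members of each pair.

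More concretely, I would start from the first pair in \eqref{xjSij-a},
\[
\sum_{j=1}^N x_j S_{ij} = x_i + [S, x_i], \qquad \sum_{j=1}^N S_{ij} x_j = x_i - [S, x_i].
\]
Adding these two relations the commutators $[S,x_i]$ cancel, and since $\{x_j, S_{ij}\} = x_j S_{ij} + S_{ij} x_j$, the left-hand side is exactly $\sum_j \{x_j, S_{ij}\}$, giving the first identity in \eqref{xjSij-a2}. Subtracting the second from the first, the $x_i$ terms cancel and the commutators add, yielding $\sum_j [x_j, S_{ij}] = 2[S, x_i]$, which is the second identity in \eqref{xjSij-a2}.

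The two identities in \eqref{xjSij-b2} are obtained in exactly the same way by adding and subtracting the two equalities in \eqref{xjSij-b}. There is essentially no obstacle here: the only thing one has to notice is that Lemma \ref{lem1} already packages the left- and right-multiplication versions symmetrically around $x_i$ (or $\nabla_i$) with opposite signs of $[S,\cdot]$, so that (anti-)symmetrising in the order of the factors isolates precisely the anticommutator and the commutator contributions.
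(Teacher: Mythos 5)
Your proof is correct and matches the paper's intent exactly: the paper presents Lemma \ref{lem1.5} as an immediate corollary of Lemma \ref{lem1}, obtained precisely by adding and subtracting the paired identities in \eqref{xjSij-a} and \eqref{xjSij-b}.
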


\section{Dunkl--Coulomb model}
\label{section-AR}

\subsection{Nonlocal Hamiltonian}

Let us define Dunkl--Coulomb Hamiltonian as
\be
\label{gcoul}
\Hgencoul=
\Delta
- \sum_{\alpha\in {\mathcal R}_+}\frac{g_\alpha(g_\alpha-s_\alpha)(\alpha, \alpha)}{x_\alpha^2} +  \frac{2\gamma}{r},
\ee
where $\Delta = \sum_{j=1}^N \frac{\partial^2}{\partial x_j^2}$ is the Laplace operaor and $\gamma \in \C$ is a parameter.
In the case ${\cal R} = {\cal A}_{N-1}$ the Hamiltonian $\Hgencoul$ was considered in \cite{CalCoul, Runge}, and in the case  ${\cal R} = {\cal A}_{1} \times {\cal A}_1$ it was considered in \cite{vinet} (here we rescale Hamiltonian by a factor of $-2$).  
In the zero coupling limit $g_\alpha = 0$ for all $\alpha\in \cal R$ one gets Coulomb problem. On the other hand,
in the zero charge limit $\gamma=0$ the Hamiltonian \eqref{gcoul} reduces to the  nonlocal version of generalised  Calogero--Moser Hamiltonian associated with a root system $\cal R$.
The latter Hamiltonian can also be referred  to as Dunkl Laplacian as it can be expressed as sum of squares of Dunkl operators \cite{Heckman} (see also \cite{poly92}
for type ${\cal A}_{N-1}$) which allows to represent the Hamiltonian \eqref{gcoul} as
\begin{equation*}
\Hgencoul=  \nabla^2
+ \frac{2\gamma}{r},
\end{equation*}
where $\nabla^2 =  \sum_{i=1}^N \nabla_i^2$. 
Being restricted to the wavefunctions $\psi(x)$ which are symmetric or antisymmetric with respect to the action of $W$
\begin{equation*}
\psi(s_\alpha x)=\varepsilon \psi(x), \quad \alpha \in {\cal R}, \quad  \varepsilon = \pm 1,
\end{equation*}
the nonlocal Hamiltonian reduces to the Calogero--Moser--Coulomb model \cite{CalCoul}. 
It is obtained by replacing $s_\alpha$ in \eqref{gcoul} with $\varepsilon$.

\subsection{Dunkl angular momenta}
Let us describe some symmetries of the Hamiltonian $\Hgencoul$. In the Coulomb limit $g_\alpha=0$  the
Calogero--Moser terms are absent, and the rotational $so(N)$ symmetry exists, whose
generators are given by the angular momenta.
The Calogero--Moser  terms break the rotational symmetry so that the angular momentum
is not conserved  any more. Instead, we   construct a version of angular momenta
using the Dunkl operators. Consider the Dunkl angular momentum 
\be
\label{Mij}
 L_{ij}=x_i\nabla_{j}-x_j\nabla_{i}, 
\ee
where $1\le i,j \le N$.
These operators satisfy the commutation relation of  the
$so(N)$   Lie algebra with the Kronecker delta replaced by the elements  $S_{ij}\in \C W$ \cite{fh} (see also \cite{K} for ${\cal R} = {\cal A}_{N-1}$):
\be
\label{comLL}
[L_{ij},L_{kl}]=
L_{il}S_{kj} +  L_{jk}S_{li} - L_{ik}S_{lj} -  L_{jl}S_{ki}.
\ee
Similarly, there are closed commutation relations  between Dunkl angular momenta   and Dunkl operators or coordinates.
More precisely, the following lemma holds which can be checked directly.
\begin{lem} \label{lem2}
Dunkl angular momenta $L_{ij}$ satisfy
\be
\label{comLu}
[L_{ij}, x_k]=   x_i S_{jk} - x_j S_{ik}, \qquad
[L_{ij}, \nabla_k]=  \nabla_i S_{jk} -  \nabla_j S_{ik},
\ee
where   $i,j,k = 1,\ldots, N$.
 \end{lem}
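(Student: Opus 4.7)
The plan is to expand both commutators using the Leibniz rule and to invoke the basic commutation relations already recorded, namely $[\nabla_i,x_j]=S_{ij}$ from \eqref{com-nn}, together with $[x_i,x_j]=0$ and $[\nabla_i,\nabla_j]=0$.

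For the first identity, I would write
\begin{equation*}
[L_{ij},x_k]=[x_i\nabla_j,x_k]-[x_j\nabla_i,x_k]
=x_i[\nabla_j,x_k]-x_j[\nabla_i,x_k],
\end{equation*}
since the commutators $[x_i,x_k]$ and $[x_j,x_k]$ vanish and hence no $x$-multiplication term survives from the Leibniz expansion. Applying \eqref{com-nn} to the remaining commutators gives precisely $x_iS_{jk}-x_jS_{ik}$, as required.

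For the second identity, the analogous expansion yields
\begin{equation*}
[L_{ij},\nabla_k]=[x_i,\nabla_k]\nabla_j-[x_j,\nabla_k]\nabla_i
=-S_{ki}\nabla_j+S_{kj}\nabla_i,
\end{equation*}
where I have used $[\nabla_j,\nabla_k]=[\nabla_i,\nabla_k]=0$ and $[x_\ell,\nabla_k]=-S_{k\ell}$. By the manifest symmetry $S_{ij}=S_{ji}$ built into \eqref{Sij}, the result equals $S_{jk}\nabla_i-S_{ik}\nabla_j$. To match the form stated in \eqref{comLu} I need the $\nabla$-analogue of \eqref{comSx}, namely $[\nabla_i,S_{jk}]=[\nabla_j,S_{ik}]$: granted this, one has $\nabla_i S_{jk}-\nabla_j S_{ik}=S_{jk}\nabla_i-S_{ik}\nabla_j+([\nabla_i,S_{jk}]-[\nabla_j,S_{ik}])=S_{jk}\nabla_i-S_{ik}\nabla_j$, and we are done.

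The one non-routine step is therefore the $\nabla$-version of \eqref{comSx}. I would establish it by the same symmetrisation trick used in the lemma preceding \eqref{comSx}: from $[\nabla_k,s_\alpha]=\frac{2\alpha_k\nabla_\alpha}{(\alpha,\alpha)}s_\alpha$ (the Dunkl-operator counterpart of $[x_k,s_\alpha]=\frac{2x_\alpha\alpha_k}{(\alpha,\alpha)}s_\alpha$, obtained from $s_\alpha\nabla_\xi s_\alpha^{-1}=\nabla_{s_\alpha(\xi)}$ applied to $\xi=e_k$) one computes
\begin{equation*}
[\nabla_k,S_{ij}]=\sum_{\alpha\in\mathcal R_+}\frac{4g_\alpha\alpha_i\alpha_j\alpha_k}{(\alpha,\alpha)^2}\,\nabla_\alpha s_\alpha,
\end{equation*}
whose right-hand side is totally symmetric in $i,j,k$. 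This is the only place where a little work is needed; everything else is one line of bracket manipulation.
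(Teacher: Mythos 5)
Your proof is correct; the paper offers no argument beyond the remark that the lemma ``can be checked directly'', and your Leibniz-rule computation together with the total symmetry of $[\nabla_k,S_{ij}]=\sum_{\alpha\in\mathcal R_+}\frac{4g_\alpha\alpha_i\alpha_j\alpha_k}{(\alpha,\alpha)^2}\nabla_\alpha s_\alpha$ in $i,j,k$ is exactly such a direct check. (A small shortcut for the second identity: since $S_{ij}=S_{ji}$, relation \eqref{com-nn} gives $L_{ij}=\nabla_j x_i-\nabla_i x_j$, and commuting this form with $\nabla_k$ places the $S$-factors to the right of the $\nabla$'s at once, so the $\nabla$-analogue of \eqref{comSx} is not needed.)
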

This allows us to establish some symmetries of the Hamiltonian $\Hgencoul$.

\begin{prop}\label{prop32}
 The  Hamiltonian \eqref{gcoul} preserves Dunkl angular momenta:
\begin{equation*}
[\Hgencoul,L_{ij}]=0
\end{equation*}
for any $i,j=1, \ldots, N$.
\end{prop}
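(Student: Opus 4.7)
The plan is to split the Hamiltonian into its Dunkl-Laplacian part $\nabla^2$ and its Coulomb part $2\gamma/r$ and to show that each of them commutes with $L_{ij}$ separately. The identity $[\mathcal{H}_\gamma, L_{ij}]=0$ then follows by linearity.

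For the Dunkl-Laplacian part, I would first compute $[\nabla^2, x_i]$. Expanding $\nabla^2=\sum_k \nabla_k^2$ and using \eqref{com-nn} yields
\[
[\nabla^2, x_i]=\sum_{k=1}^N \bigl(\nabla_k S_{ki}+ S_{ki}\nabla_k\bigr)=\sum_{k=1}^N \{\nabla_k,S_{ki}\},
\]
and the first identity of \eqref{xjSij-b2} in Lemma \ref{lem1.5} collapses this to $2\nabla_i$. Since the Dunkl operators commute, $[\nabla^2,\nabla_j]=0$, so
\[
[\nabla^2, x_i\nabla_j]=[\nabla^2,x_i]\nabla_j = 2\nabla_i\nabla_j.
\]
Antisymmetrising in $i,j$ and using $[\nabla_i,\nabla_j]=0$ gives $[\nabla^2,L_{ij}]=2\nabla_i\nabla_j - 2\nabla_j\nabla_i=0$.

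For the Coulomb part I would exploit the $W$-invariance of $r=(\sum_k x_k^2)^{1/2}$: since $s_\alpha r = r s_\alpha$ for every $\alpha\in\mathcal R$, the nonlocal terms in $\nabla_j$ commute with multiplication by $1/r$, and one finds
\[
[\nabla_j, 1/r]=\partial_j(1/r)=-x_j/r^3.
\]
Because $[x_i,1/r]=0$, this gives
\[
[L_{ij},1/r]=x_i[\nabla_j,1/r]-x_j[\nabla_i,1/r]=-\frac{x_i x_j}{r^3}+\frac{x_j x_i}{r^3}=0,
\]
so $[2\gamma/r, L_{ij}]=0$ and the proof is complete.

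The computation is essentially formal once one has Lemma \ref{lem1.5}; the only mildly delicate point is to verify that the reflection terms inside $\nabla_j$ do not contribute to $[\nabla_j,1/r]$, which is why the $W$-invariance of $r$ has to be invoked explicitly. Beyond that, there is no real obstacle.
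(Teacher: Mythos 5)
Your proof is correct and follows essentially the same route as the paper: both split ${\mathcal H}_\gamma$ into $\nabla^2$ and $2\gamma/r$ and reduce each commutator to the (anti-)commutation identities of Lemma \ref{lem1.5}. The only cosmetic difference is that the paper establishes $[r^2,L_{ij}]=0$ and passes to $1/r$ implicitly, whereas you compute $[\nabla_j,1/r]=-x_j/r^3$ directly from the $W$-invariance of $r$, which handles the Coulomb term a little more explicitly.
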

\begin{proof}
By Lemma \ref{lem2} we have
$$
[L_{ij}, x_k^2] = \{[L_{ij}, x_k], x_k\} =    x_i\{S_{jk}, x_k\} -x_j \{S_{ik}, x_k\}
$$
for any $k=1, \ldots, N$. It follows now from Lemma \ref{lem1.5} that
\begin{equation*}
[ r^2, L_{ij}]=0.
\end{equation*}
 Similarly one can check that
\be
\label{com-p2L}
[\nabla^2, L_{ij}]=0
\ee
(alternatively, see \cite{feigin}), and the statement follows.
\end{proof}

The   Dunkl angular momenta algebra $H_g^{so(N)}(W)$ was defined in \cite{fh} as the algebra generated by Dunkl angular momenta $L_{ij}$ and the Coxeter group algebra $\C W$. It
 has a second order Casimir element ${\cal I} = {\cal I}_N$, which is an analogue of
the angular momentum square:
\begin{gather}
\label{Lsq}
{\cal I}  = L^2_{(N)}    -  S(S-N+2),
\end{gather}
where
$L^2_{(N)}=\sum_{i<j}^N L_{ij}^2$,
and
$$
[{\cal I}, L_{ij}]=0
$$
for any $i, j = 1,\ldots, N$.
 The element $\cal I$ represents the angular part of the nonlocal Hamiltonian $\Hgencoul$
 \cite{fh}:
\be
\label{Hzero}
\Hgencoul= \nabla^2 + \frac{2\gamma}{r} = \partial_r^2 + \frac{N-1}{r}\partial_r + \frac{2\gamma}{r}  +\frac{{\cal I}}{r^2}.
\ee

\begin{rem}
It is stated in \cite{fh}  that the centre of $H_g^{{so}(N)}(W)$ is generated by $\cal I$ and $1\in W$. If $ W$ contains element $\sigma$ such that  $\sigma(x)= -  x$ for any $x\in V$ then it is easy to see that the centre is generated by $\cal I$ and $\sigma$. The  generator $\sigma$ is missing in \cite{fh}.
\end{rem}


Let us also recall that generators $L_{ij}$ satisfy additional {\it crossing} relations \cite{fh}
\be
\label{cros}
L_{ij}(L_{kl} - S_{kl})+L_{jk}(L_{il} -  S_{il})+L_{ki}(L_{jl} -  S_{jl})=0
\ee
for any $i,j,k,l = 1, \ldots, N$.

For any $\xi=(\xi_1, \ldots, \xi_N), \eta=(\eta_1, \ldots, \eta_N) \in V$ define $L_{\xi \eta} = \sum_{i, j =1}^N \xi_i \eta_j L_{ij}$. Then one has
\begin{equation}
\label{rel3}
w L_{\xi \eta} = L_{w(\xi), w(\eta)} w
\end{equation}
for any $w \in W$.

Relations  \eqref{comLL}, \eqref{cros}, \eqref{rel3}  are defining relations of  algebra $H_g^{so(N)}(W)$ \cite{fh}.

We will also need the next lemma which generalises well-known orthogonality relation between cooordinate vector and angular momentum vector in the three-dimensional space.

\begin{lem}
\label{orthogrel}
Relations
$$
L_{ij} x_k + L_{jk} x_i + L_{ki} x_j = x_k L_{ij} + x_i L_{jk}+ x_j L_{ki}=0,
$$
$$
L_{ij} \nabla_k + L_{jk} \nabla_i + L_{ki} \nabla_j = \nabla_k L_{ij} + \nabla_i L_{jk}+ \nabla_j L_{ki}=0
$$
hold for any $i,j,k=1, \ldots, N$.
\end{lem}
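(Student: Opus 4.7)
The plan is to prove all four identities by direct expansion using the definition $L_{ij}=x_i\nabla_j-x_j\nabla_i$ and the basic relations $[x_i,x_j]=0$, $[\nabla_i,\nabla_j]=0$, $[\nabla_i,x_j]=S_{ij}$ from \eqref{com-nn}, together with the manifest symmetry $S_{ij}=S_{ji}$ coming from \eqref{Sij}. Nothing deeper than cancellation of six-term sums is needed; the lemma is essentially a bookkeeping statement about the antisymmetrisation indicated by cyclic permutations of $(i,j,k)$.

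I would treat the two ``coordinate/Dunkl-operator on the left'' identities first, since there $L_{ab}$ already stands to the right and no reordering through an $S_{ab}$ is necessary. Expanding
$$
x_k L_{ij}+x_i L_{jk}+x_j L_{ki}=x_kx_i\nabla_j-x_kx_j\nabla_i+x_ix_j\nabla_k-x_ix_k\nabla_j+x_jx_k\nabla_i-x_jx_i\nabla_k,
$$
the six terms pair off and cancel by commutativity of coordinates. The same computation with $x$ replaced by $\nabla$ proves $\nabla_k L_{ij}+\nabla_i L_{jk}+\nabla_j L_{ki}=0$, this time using commutativity of Dunkl operators.

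For the two ``$L$-on-the-left'' identities, I would expand $L_{ab}x_c$ and $L_{ab}\nabla_c$ and move the trailing factor past the inner $\nabla$ or $x$ using $\nabla_b x_c=x_c\nabla_b+S_{bc}$. The ``two coordinates times one Dunkl'' (respectively ``one coordinate times two Dunkl operators'') pieces cancel exactly as in the previous paragraph, leaving a residual sum of $S$-terms: for the first identity this is
$$
x_i(S_{jk}-S_{kj})+x_j(S_{ki}-S_{ik})+x_k(S_{ij}-S_{ji}),
$$
which vanishes by $S_{ab}=S_{ba}$. The analogous residual for $L_{ij}\nabla_k+L_{jk}\nabla_i+L_{ki}\nabla_j$ is $(S_{ki}-S_{ik})\nabla_j+(S_{jk}-S_{kj})\nabla_i+(S_{ij}-S_{ji})\nabla_k$, which vanishes for the same reason.

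There is no genuine obstacle. The only point that must be observed is the symmetry of $S_{ij}$, which is immediate from the definition \eqref{Sij}; everything else is cancellation by $[x_i,x_j]=0$ or $[\nabla_i,\nabla_j]=0$ after cyclic relabelling of the indices $(i,j,k)$.
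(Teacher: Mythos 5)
Your proof is correct and uses the same basic ingredients as the paper's ($[x_i,x_j]=[\nabla_i,\nabla_j]=0$, $[\nabla_i,x_k]=S_{ik}$ and the symmetry $S_{ik}=S_{ki}$); the paper merely packages the computation differently, first showing that the cyclic sum of anticommutators $\{L_{ij},x_k\}+\{L_{jk},x_i\}+\{L_{ki},x_j\}$ vanishes and then that the cyclic sum of commutators vanishes via Lemma \ref{lem2}, whereas you expand each of the four sums directly. One small imprecision: your claim that for the ``operator on the left'' identities no reordering through an $S_{ab}$ is needed is true for $x_kL_{ij}+x_iL_{jk}+x_jL_{ki}$, but for $\nabla_kL_{ij}+\nabla_iL_{jk}+\nabla_jL_{ki}$ one must still move $\nabla_k$ past $x_i$ (or equivalently first rewrite $L_{ij}=\nabla_jx_i-\nabla_ix_j$, which itself uses $S_{ij}=S_{ji}$), so residual $S$-terms do appear there as well; they cancel by the symmetry of $S$ exactly as in your other two cases, so nothing is lost.
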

\begin{proof}
It follows from the relation \eqref{com-nn}  that
$$
L_{ij} x_k + L_{jk} x_i +L_{ki} x_j = x_i L_{kj} + x_k L_{ji} +x_j L_{ik}.
$$
Therefore
$$
\{L_{ij}, x_k\} + \{L_{jk}, x_i\} + \{L_{ki}, x_j\}  = 0,
$$
and hence
$$
x_k L_{ij} + x_j L_{ki} + x_i L_{jk} = -\frac12 ([L_{ij}, x_k] + [L_{jk}, x_i] + [L_{ki}, x_j])  = 0
$$
by Lemma \ref{lem2}.

The remaining relations can be established similarly.
\end{proof}

\subsection{Dunkl Laplace-Runge-Lenz vector}

In addition to the angular momentum symmetry, the standard Coulomb model possesses a hidden symmetry known as
Runge--Lenz or Laplace vector. In the presence of extra non-local Calogero--Moser potential term this vector can  be defined by making use of Dunkl operators.
For the Calogero--Moser--Coulomb nonlocal (Dunkl--Coulomb) problem related to the root system ${\cal R} = {\cal A}_{N-1}$ this conserved quantity was introduced in \cite{Runge} while in the case of ${\mathcal R}={\cal A}_1\times {\cal A}_1$ it was introduced in \cite{vinet}.
Now we extend this construction to an arbitrary root system $\cal R$.
We define components $A_i$ of a vector $A=(A_1, \ldots, A_N)$ by the formula
\be
\label{Ai}
 A_i=-\frac12 \sum_{j=1}^N\left\{ L_{ij},\nabla_j\right\}
+\frac12[\nabla_i,S] -\frac{\gamma x_i}{r},
\ee
where $i=1, \ldots, N$. This reduces to the usual Laplace--Runge--Lenz (LRL) vector in the zero coupling limit $g_\alpha=0$. Below in this section we derive various relations which components $A_i$ satisfy.  Most of these statements can be found in \cite{Runge}, \cite{vinet} for the root systems $ {\cal A}_{N-1}, {\cal A}_1\times {\cal A}_1$ respectively.

\begin{prop}\label{prop33}
Components \eqref{Ai} of the Dunkl LRL vector  $A$   satisfy  the relation
$$
[A_i, \Hgencoul]=0
$$
 for any $i=1, \ldots, N$.
\end{prop}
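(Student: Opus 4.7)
The plan is to grade the identity in the Coulomb coupling $\gamma$. Decompose $A_i = A_i^{(0)} - \gamma x_i/r$ where $A_i^{(0)} = -\tfrac12 \sum_{j=1}^N\{L_{ij}, \nabla_j\} + \tfrac12[\nabla_i, S]$, and split $\Hgencoul = \nabla^2 + 2\gamma/r$. Expanding the commutator and observing that $[x_i/r, 1/r] = 0$, since both are multiplication operators, yields
\[
[A_i, \Hgencoul] \;=\; [A_i^{(0)}, \nabla^2] \;+\; 2\gamma[A_i^{(0)}, 1/r] \;-\; \gamma[x_i/r, \nabla^2].
\]
Since this must vanish for all values of $\gamma$, the statement reduces to two independent claims: (i) $[A_i^{(0)}, \nabla^2] = 0$ and (ii) $2[A_i^{(0)}, 1/r] = [x_i/r, \nabla^2]$.

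Claim (i) is immediate from the Leibniz rule applied to the anticommutator $\{L_{ij}, \nabla_j\}$ and the commutator $[\nabla_i, S]$ inside $A_i^{(0)}$, together with three inputs: $[\nabla^2, \nabla_j] = 0$ by commutativity of Dunkl operators, $[\nabla^2, L_{ij}] = 0$ by Proposition \ref{prop32} (see \eqref{com-p2L}), and $[\nabla^2, S] = 0$ because $\nabla^2$ is $W$-invariant while $S \in \C W$. Hence $\nabla^2$ commutes with each summand of $A_i^{(0)}$ separately.

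Claim (ii) is the computational heart of the proof. I would first establish two ``Dunkl gradient'' identities. Since $r$ is $W$-invariant, the reflection contributions in $\nabla_j$ annihilate $1/r$, giving $[\nabla_j, 1/r] = -x_j/r^3$. For $x_i/r$, the relation \eqref{sij-mat} yields $x_i - s_\alpha(x_i) = 2x_\alpha \alpha_i/(\alpha, \alpha)$, and a direct computation from the definition of $\nabla_j$ then produces
\[
[\nabla_j,\, x_i/r] \;=\; S_{ij}/r \;-\; x_i x_j/r^3.
\]
Using these, I would expand $[x_i/r, \nabla^2] = -\sum_j \{\nabla_j, [\nabla_j, x_i/r]\}$, reducing the $S_{ij}$-part via Lemma \ref{lem1.5} (which collapses $\sum_j \{\nabla_j, S_{ij}\}$ to $2\nabla_i$) and the $x_i x_j/r^3$-part via \eqref{x-pi}. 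On the other side, the identity $[L_{ij}, 1/r] = 0$, which follows from $L_{ij} = x_i \nabla_j - x_j \nabla_i$ and the $W$-invariance of $1/r$ (cf.\ the proof of Proposition \ref{prop32}), reduces the anticommutator piece of $2[A_i^{(0)}, 1/r]$ to a combination of $\{L_{ij}, x_j/r^3\}$-type terms, while the $[\nabla_i, S]$ piece contributes a root-system sum supported on the $s_\alpha$ with coefficients $x_\alpha \alpha_i/r^3$.

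The main obstacle will be matching these two expansions term-by-term. One uses Lemma \ref{lem1} (to evaluate $\sum_j S_{ij} x_j$ and $\sum_j x_j S_{ij}$), Lemma \ref{lem1.5}, and the orthogonality relations of Lemma \ref{orthogrel} together with \eqref{comLu} to rewrite $\sum_j \{L_{ij}, x_j\}$ in terms of $x_i(x, \nabla)$, $r^2 \nabla_i$, $\{S, x_i\}$, and $\nabla_i$. In the $g = 0$ limit the non-reflection pieces then reproduce Pauli's classical Coulomb calculation and cancel automatically; the genuinely new content is the matching of the reflection contributions supported on each $s_\alpha$, which is powered by the total symmetry of \eqref{xk-Sij} in the three indices $i, j, k$ together with \eqref{comSx} and the $W$-equivariance of $g_\alpha$. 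All manipulations are algebraic rearrangements inside the rational Cherednik algebra, so the argument is direct but requires careful bookkeeping.
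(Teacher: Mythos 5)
Your proposal is correct and follows essentially the same route as the paper: the $\gamma$-independent part is killed by Proposition \ref{prop32}, Dunkl commutativity and the $W$-invariance of $\nabla^2$, and the $\gamma$-linear part is handled via the commutators $[\nabla_j,1/r]=-x_j/r^3$ and $[\nabla_j,x_i/r]=S_{ij}/r-x_ix_j/r^3$ together with Lemmas \ref{lem1} and \ref{lem1.5}, exactly as in the paper's computation from \eqref{final} to the end. The only caveat is a phrasing one: you should present the vanishing of each coefficient of $\gamma$ as two claims to be proved (from which the statement follows for every $\gamma$), not as a consequence of the statement.
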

\begin{proof}
Note that for any $\alpha \in \mathcal R$
\begin{equation}\label{pisquare}
\left[s_\alpha, \sum_{k=1}^N \nabla_k^2\right] = \sum_{k=1}^N \left(\nabla_k - \frac{2 \alpha_k}{(\alpha,\alpha)} (\alpha, \nabla) \right)^2s_\alpha - \sum_{k=1}^N \nabla_k^2 s_\alpha = 0.
\end{equation}
Proposition \ref{prop32} together with relation  \eqref{pisquare} implies that
\begin{equation}
\label{final}
\Big[ A_i\,, \, \Hgencoul\Big]
=  \gamma \sum_{j=1}^N \left(  \Big\{L_{ij},  \Big[\frac1r,\nabla_j\Big]\Big\}
- \Big\{\Big[\frac{x_i}{r},\nabla_j\Big],\nabla_j\Big\}\right)
- \gamma \Big[\Big[\frac{1}{r},\nabla_i\Big],S\Big].
\end{equation}
Notice that
\begin{equation}\label{rpi}
\Big[\frac1{r}, \nabla_j\Big] =  \frac{x_j}{r^3}.
\end{equation}
Equalities \eqref{final},  \eqref{rpi} together with \eqref{com-nn} imply that
\be\label{fin2}
\Big[ A_i\,, \, \Hgencoul\Big]
=  \frac{\gamma}{r^3}  \sum_{j=1}^N \left\{L_{ij},x_j\right\}
 - {\gamma}  \sum_{j=1}^N\Big\{ \frac{x_i x_j}{r^3}-\frac{S_{ij}}{r},\nabla_j\Big\}
  + \frac{\gamma}{r^3}[S,x_i].
\ee
Notice that
\be\label{interm}
\frac{1}{r^3}\sum_{j=1}^N \{L_{ij},x_j\}=
\sum_{j=1}^N\left((\nabla_j x_i - \nabla_i x_j)\frac{x_j}{r^3} +  \frac{x_j}{r^3}(x_i \nabla_j - x_j \nabla_i) \right)
= \sum_{j=1}^N\Big\{ \frac{x_ix_j}{r^3},\nabla_j\Big\}-\Big\{\frac1r,\nabla_i\Big\}.
\ee
By substituting \eqref{interm} into the right-hand side of \eqref{fin2} we get
\begin{equation*}
 [A_i,\Hgencoul]
 = {\gamma}  \sum_{j=1}^N\Big\{\frac{S_{ij}-\delta_{ij}}{r},\nabla_j\Big\}
  + \frac{\gamma}{r^3}[S,x_i].
 \end{equation*}
By making use of \eqref{rpi} we obtain
\begin{equation*}
 [A_i,\Hgencoul] =  \frac{\gamma}{r}  \sum_{j=1}^N\Big\{S_{ij}-\delta_{ij},\nabla_j\Big\}
 + \frac{\gamma}{r^3} \Big(x_i-\sum_{j=1}^N x_jS_{ij}+[S,x_i]\Big).
 \end{equation*}
It follows from  formulas  \eqref{xjSij-a}, \eqref{xjSij-b}  that  $[A_i,\Hgencoul] = 0$.
\end{proof}

Other forms of Dunkl LRL vector are given by the following statement.

\begin{prop}\label{compAprop}
Components \eqref{Ai} of the Dunkl LRL vector  $A$  can be represented in the following ways:
\begin{align}
\label{Ai-2}
A_i&=- x_i \Big(\nabla^2+\frac\gamma r\Big)+\nabla_i\Big( r\partial_r+ \sfrac {N-3}2\Big),
\\
\label{Ai-2+}
A_i&=-\Big(\nabla^2+\frac\gamma r\Big)x_i+\Big( r\partial_r+ \sfrac {N+3}2\Big)\nabla_i
\end{align}
for any $i=1, \ldots, N$.
\end{prop}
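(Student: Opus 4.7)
The plan is to compute the anticommutator sum in the definition of $A_i$ explicitly, then recognize the pieces that make up $r\partial_r$ and $\nabla^2$.

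First I would expand $\{L_{ij},\nabla_j\}=(x_i\nabla_j-x_j\nabla_i)\nabla_j+\nabla_j(x_i\nabla_j-x_j\nabla_i)$, use commutativity of Dunkl operators, and apply the exchange relation \eqref{com-nn} in the form $\nabla_j x_i=x_i\nabla_j+S_{ij}$ and $\nabla_j x_j=x_j\nabla_j+S_{jj}$ to move every $\nabla_j$ to the right. Summing over $j$ collects the quantities $\sum_j x_j\nabla_j=(x,\nabla)=r\partial_r+S$ (by \eqref{x-pi}), $\sum_j S_{jj}=N-2S$ (by \eqref{Siisum}) and $\sum_j S_{ij}\nabla_j=\nabla_i-[S,\nabla_i]$ (by \eqref{xjSij-b}). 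The outcome is
\begin{equation*}
\sum_{j=1}^N\{L_{ij},\nabla_j\}=2x_i\nabla^2-2r\partial_r\nabla_i+(1-N)\nabla_i-[S,\nabla_i],
\end{equation*}
where the terms $\pm 2S\nabla_i$ cancel. Substituting into \eqref{Ai} and noticing that $\tfrac12[S,\nabla_i]+\tfrac12[\nabla_i,S]=0$ simplifies the definition to
\begin{equation*}
A_i=-x_i\nabla^2+r\partial_r\nabla_i+\tfrac{N-1}{2}\nabla_i-\tfrac{\gamma x_i}{r}.
\end{equation*}

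To reach \eqref{Ai-2}, I would then use the homogeneity identity $[r\partial_r,\nabla_i]=-\nabla_i$ (Dunkl operators are homogeneous of degree $-1$, while $r\partial_r=\sum_k x_k\partial_k$ is the Euler field, which commutes with the group elements $s_\alpha$). This rewrites $r\partial_r\nabla_i=\nabla_i(r\partial_r-1)$, so combining with the $\frac{N-1}{2}\nabla_i$ term yields $\nabla_i(r\partial_r+\tfrac{N-3}{2})$, and grouping $x_i\nabla^2+\tfrac{\gamma x_i}{r}=x_i(\nabla^2+\tfrac{\gamma}{r})$ gives \eqref{Ai-2}.

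For \eqref{Ai-2+}, I would move $x_i$ past $\nabla^2+\gamma/r$ using $[\nabla^2,x_i]=\sum_j\{S_{ij},\nabla_j\}=2\nabla_i$ (the first equality from \eqref{com-nn} since $\nabla_j$ commute, and the second from Lemma \ref{lem1.5}), together with $[\gamma/r,x_i]=0$. Simultaneously, I would move $r\partial_r$ past $\nabla_i$ one more time using $\nabla_i r\partial_r=r\partial_r\nabla_i+\nabla_i$. The two shifts produce exactly the extra $2\nabla_i$ needed to upgrade $\tfrac{N-3}{2}$ to $\tfrac{N+3}{2}$, giving \eqref{Ai-2+}.

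No serious obstacle arises: the computation is a careful bookkeeping of the exchange contributions. The only point demanding care is the cancellation of the $S$-dependent pieces — both the $\pm 2S\nabla_i$ terms produced by $(x,\nabla)$ and $\sum S_{jj}$, and the twin commutators $[S,\nabla_i]$ appearing from the $\sum S_{ij}\nabla_j$ reduction and from the explicit $\tfrac12[\nabla_i,S]$ in the definition of $A_i$. Once these cancellations are observed, the rest is immediate from $[r\partial_r,\nabla_i]=-\nabla_i$ and $[\nabla^2,x_i]=2\nabla_i$.
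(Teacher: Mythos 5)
Your proof is correct and follows essentially the same route as the paper: both reduce the anticommutator sum to $-x_i\nabla^2+r\partial_r\nabla_i+\frac{N-1}{2}\nabla_i-\frac{\gamma x_i}{r}$ using the exchange relations, $(x,\nabla)=r\partial_r+S$, $\sum_j S_{jj}=N-2S$ and Lemma \ref{lem1}, and then pass to \eqref{Ai-2} and \eqref{Ai-2+} via $[r\partial_r,\nabla_i]=-\nabla_i$ and $[x_i,\nabla^2]=-2\nabla_i$. The only cosmetic difference is that the paper first rewrites $\{L_{ij},\nabla_j\}=2L_{ij}\nabla_j+[\nabla_j,L_{ij}]$ and evaluates the commutator sum via Lemma \ref{lem2}, while you normal-order the expanded anticommutator directly; all the $S$-cancellations you flag are exactly those occurring in the paper's computation.
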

\begin{proof}
By applying relations \eqref{comLu}, \eqref{xjSij-b} and  \eqref{Siisum} we obtain
the following formula:
\begin{equation*}
\sum_{j=1}^N [\nabla_j,L_{ij}]=-(N-1)\nabla_i + \{ S,\nabla_i\}.
 \end{equation*}
This transfers   expression \eqref{Ai} for the components of Dunkl LRL vector  to the form
\begin{equation*}
 A_i
= - \sum_{j=1}^N  L_{ij}\nabla_j-  \left(S-\sfrac{N-1}{2}\right)\nabla_i -\frac{\gamma x_i}{r}.
 \end{equation*}
By substituting  \eqref{Mij} into the above expression and using the first  identity in \eqref{x-pi} together with the
commutator
 \begin{equation*}
[r\partial_r, \nabla_i]=- \nabla_i,
 \end{equation*}
we arrive
at the first required relation \eqref{Ai-2}.
Then  relation  \eqref{Ai-2+} follows since
 \begin{equation*}
[x_i,\nabla^2]= - 2  \nabla_i
 \end{equation*}
due to \eqref{xjSij-b2}.
Note that one has
\begin{equation*}
(r\partial_r)^+ = - r\partial_r -   N,
 \end{equation*}
which implies formula \eqref{Ai-2} since Dunkl LRL vector satisfies Hermitian property $A_i^+ = A_i$ for any $i$.
%
\end{proof}

Proposition \ref{compAprop} allows to derive commutation relations between components of Dunkl LRL  vector and Dunkl angular momentum.
\begin{lem}\label{newlem}
The following relation holds
\begin{equation*}
[A_i, L_{kl}]= A_{l}S_{ki} - A_{k}S_{li}
\end{equation*}
for any $i,k,l=1, \ldots, N$.
\end{lem}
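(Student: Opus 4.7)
The plan is to use the alternative form of the Dunkl LRL vector from Proposition \ref{compAprop}, namely
$$A_i = -x_i(\nabla^2 + \gamma/r) + \nabla_i(r\partial_r + (N-3)/2),$$
and to observe that both $\nabla^2 + \gamma/r$ and $r\partial_r + (N-3)/2$ behave as ``scalars'' in the sense that they commute with every $L_{kl}$ and with every $S_{ij}$. Granted this, the Leibniz rule together with the vector commutation rules of Lemma \ref{lem2} gives the result almost immediately.

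To see commutativity with $L_{kl}$: Proposition \ref{prop32} yields $[\nabla^2 + 2\gamma/r, L_{kl}] = 0$ for every $\gamma \in \C$, so $[\nabla^2, L_{kl}] = 0$ and $[1/r, L_{kl}] = 0$ follow separately, whence $[\nabla^2 + \gamma/r, L_{kl}] = 0$. Using the commutators $[r\partial_r, x_i] = x_i$ (a direct consequence of \eqref{x-pi} and \eqref{xjSij-a}) and $[r\partial_r, \nabla_i] = -\nabla_i$ (already used in the proof of Proposition \ref{compAprop}), one computes
$$[r\partial_r, L_{kl}] = [r\partial_r, x_k]\nabla_l + x_k[r\partial_r, \nabla_l] - [r\partial_r, x_l]\nabla_k - x_l[r\partial_r, \nabla_k] = 0.$$
Commutativity with $S_{ij}$ reduces to commutativity with each $s_\alpha$, which preserves $\nabla^2$ by \eqref{pisquare}, preserves $r$ by orthogonality (and hence $1/r$), and commutes with the $W$-invariant operator $r\partial_r$.

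With these scalar properties in hand, Lemma \ref{lem2} gives $[x_i, L_{kl}] = x_l S_{ki} - x_k S_{li}$ and $[\nabla_i, L_{kl}] = \nabla_l S_{ki} - \nabla_k S_{li}$. Substituting into
$$[A_i, L_{kl}] = -[x_i, L_{kl}](\nabla^2 + \gamma/r) + [\nabla_i, L_{kl}](r\partial_r + (N-3)/2)$$
and pushing the scalar factors $\nabla^2 + \gamma/r$ and $r\partial_r + (N-3)/2$ through the $S$-terms (which is legal since they commute), one regroups into
$$[A_i, L_{kl}] = \bigl(-x_l(\nabla^2 + \gamma/r) + \nabla_l(r\partial_r + (N-3)/2)\bigr) S_{ki} - \bigl(-x_k(\nabla^2 + \gamma/r) + \nabla_k(r\partial_r + (N-3)/2)\bigr) S_{li},$$
which equals $A_l S_{ki} - A_k S_{li}$. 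The only real work is in the scalar commutation checks, and each of them is routine given the relations collected in Section \ref{sec2}.
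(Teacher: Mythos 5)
Your proof is correct and follows exactly the route the paper takes: the paper's own proof of Lemma \ref{newlem} is the one-line remark that it "follows immediately from Lemma \ref{lem2}, Proposition \ref{compAprop} and formula \eqref{com-p2L}", and your argument simply supplies the omitted details (the scalar commutation checks for $\nabla^2+\gamma/r$ and $r\partial_r$ against $L_{kl}$ and $S_{ij}$, and the Leibniz-rule regrouping). No gaps.
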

Lemma \ref{newlem} follows immediately from Lemma \ref{lem2}, Proposition   \ref{compAprop} and formula  \eqref{com-p2L}.


It appears that commutators of components of Dunkl LRL  vector can be expressed in a compact form.
\begin{lem}
\label{newlem2}
The following relation holds
\begin{equation*}
[A_i,A_j]= \Hgencoul L_{ij}
\end{equation*}
for any $i,j=1, \ldots, N$.
\end{lem}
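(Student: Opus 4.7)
The strategy is to decompose $A_i=B_i-\gamma x_i/r$ into its $\gamma$-independent ``Calogero--Moser'' part $B_i=-x_i\nabla^2+\nabla_i D$ (where $D:=r\partial_r+\sfrac{N-3}{2}$, via Proposition~\ref{compAprop}) and its ``Coulomb'' correction. Since $x_i/r$ and $x_j/r$ are commuting multiplication operators, the commutator expands as
$$
[A_i,A_j]=[B_i,B_j]+\gamma\bigl([B_j,x_i/r]-[B_i,x_j/r]\bigr).
$$
Because $\mathcal{H}_\gamma L_{ij}=\nabla^2L_{ij}+(2\gamma/r)L_{ij}$, the desired identity follows if one shows the two pieces equal $\nabla^2L_{ij}$ and $(2/r)L_{ij}$, respectively.

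For $[B_i,B_j]$ I would expand $B_iB_j-B_jB_i$ term by term using only $[\nabla_i,\nabla_j]=0$, $[\nabla^2,x_k]=2\nabla_k$ (a consequence of Lemma~\ref{lem1.5}), and the Euler-type relations $[D,x_k]=x_k$, $[D,\nabla_k]=-\nabla_k$, $[D,\nabla^2]=-2\nabla^2$. The cross-terms reorganize via the identities $x_i\nabla_j-x_j\nabla_i=L_{ij}$ (by definition) and $\nabla_ix_j-\nabla_jx_i=-L_{ij}$ (using $S_{ij}=S_{ji}$). Collecting the four blocks and using $[D,\nabla^2]=-2\nabla^2$ to absorb the $D$-conjugations, one obtains the clean formula $[B_i,B_j]=L_{ij}\nabla^2$, which equals $\nabla^2L_{ij}$ by~\eqref{com-p2L}.

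The $\gamma$-linear piece rests on the commutator $[\nabla_k,x_j/r]=S_{kj}/r-x_jx_k/r^3$, obtained from $[\nabla_k,x_j]=S_{kj}$ together with $[\nabla_k,1/r]=-x_k/r^3$ (the latter because $1/r$ is $W$-invariant, so only the differential part of $\nabla_k$ contributes). Since $x_j/r$ is homogeneous of degree zero, $[D,x_j/r]=0$, and therefore
$$
[B_i,x_j/r]=-x_i[\nabla^2,x_j/r]+\bigl(S_{ij}/r-x_ix_j/r^3\bigr)D.
$$
A Leibniz expansion reduces $[\nabla^2,x_j/r]$ to sums handled by Lemmas~\ref{lem1} and~\ref{lem1.5}, namely $\sum_k\{\nabla_k,S_{kj}\}=2\nabla_j$ and $\sum_k x_kS_{kj}=x_j+[S,x_j]$, together with an elementary computation of $\sum_k[\nabla_k,x_jx_k/r^3]$ producing $\{x_j,S\}/r^3$ and scalar pieces. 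A small miracle then occurs: the $[S,x_j]$-contributions from (a) $\sum_k\{\nabla_k,S_{kj}/r\}$ cancel exactly against the $[S,x_j]$-contributions from (b) $\sum_k\{\nabla_k,x_jx_k/r^3\}$, leaving the $W$-invariant expression $[\nabla^2,x_j/r]=(2/r)\nabla_j-(N-1)x_j/r^3-2x_jr\partial_r/r^3$. Antisymmetrization in $i,j$ annihilates the $S_{ij}/r$ and $x_ix_j/r^3$ terms (using $S_{ij}=S_{ji}$ and $[x_i,x_j]=0$) and collapses the surviving terms, yielding $[B_j,x_i/r]-[B_i,x_j/r]=(2/r)(x_i\nabla_j-x_j\nabla_i)=(2/r)L_{ij}$.

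The main technical obstacle is the $\gamma$-linear step: one has to carefully bookkeep the reflection contributions $S_{kj}$ as well as the Coxeter-algebra expressions $\{x_j,S\}$ and $[S,x_j]$ that appear in $[\nabla^2,x_j/r]$, relying on the cancellation of the $[S,x_j]$-terms so that the surviving $W$-invariant part antisymmetrizes cleanly into $(2/r)L_{ij}$. Combining the two pieces gives $[A_i,A_j]=(\nabla^2+2\gamma/r)L_{ij}=\mathcal{H}_\gamma L_{ij}$, as required.
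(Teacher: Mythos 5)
Your proof is correct. The paper's own proof also rests on Proposition \ref{compAprop}, but it exploits the two representations simultaneously rather than expanding in powers of $\gamma$: it writes $A_i$ in the form \eqref{Ai-2+} (with $x_i$, $\nabla_i$ to the right of the radial factors) and $A_j$ in the form \eqref{Ai-2} (with them to the left), so that in $A_iA_j-A_jA_i$ the diagonal blocks vanish by $[x_i,x_j]=[\nabla_i,\nabla_j]=0$ and the cross blocks assemble directly into $x_i\nabla_j-x_j\nabla_i=L_{ij}=-(\nabla_ix_j-\nabla_jx_i)$ sandwiched between $\nabla^2+\gamma/r$ and $r\partial_r+\mathrm{const}$. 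Since $L_{ij}$ commutes with $r$, $\nabla^2$ and $r\partial_r$, the whole commutator --- Coulomb term included --- collapses in one step to $\bigl([r\partial_r,\nabla^2+\gamma/r]+3(\nabla^2+\gamma/r)\bigr)L_{ij}=\mathcal{H}_\gamma L_{ij}$. Your decomposition $A_i=B_i-\gamma x_i/r$ with a single representation of $B_i$ forces you to carry out the $D$-conjugations explicitly in $[B_i,B_j]$ and, more substantially, to compute $[\nabla^2,x_j/r]$ with its reflection-term bookkeeping; that computation is right, and a cleaner route to it is $[\nabla^2,x_j/r]=[\nabla^2,x_j]r^{-1}+x_j[\nabla^2,r^{-1}]$, where $[\nabla^2,x_j]=2\nabla_j$ by Lemma \ref{lem1.5} and $[\nabla^2,r^{-1}]$ is purely radial by \eqref{Hzero}, which makes the cancellation of the group-algebra terms automatic rather than a ``small miracle''. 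The antisymmetrization then kills the $S_{ij}/r$ and $x_ix_j/r^3$ terms exactly as you say, so your argument is a valid, if longer, alternative; what the paper's mixed-representation trick buys is that all the Dunkl-specific cancellations are packaged once and for all into the identity $\nabla_ix_j-\nabla_jx_i=-L_{ij}$ and the commutativity of $L_{ij}$ with the radial operators.
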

\begin{proof}
Let us consider the product $A_iA_j$, where the operator $A_i$ has the form \eqref{Ai-2+} and the operator $A_j$ has  the form \eqref{Ai-2}.
Let us subtract  from this product the same  expression with the swapped indexes, $i\leftrightarrow j$. We obtain
\be
\label{aiaj54}
[A_i,A_j]=\left(\left ( r\partial_r +  \sfrac {N+3}2\right)\left(\nabla^2+\frac\gamma r\right)
- \left(\nabla^2+\frac\gamma r\right) \left( r\partial_r+ \sfrac {N-3}2\right)\right) L_{ij}
\ee
as $L_{ij}$ commutes with both $r$ and $\nabla^2$ by the proof of Proposition \ref{prop32}.
Since
\be
\label{commagain}
[r\partial_r, \nabla^2]= - 2 \nabla^2,
\qquad
[r \partial_r, r^{-1}]= - r^{-1},
\ee
we get from  \eqref{aiaj54}  that
$$
[A_i, A_j] =  \left(\frac{2\gamma}{r}+\nabla^2\right)L_{ij}= \Hgencoul L_{ij}
$$
as required.
\end{proof}
The next lemma gives a compact expression for the squared length of the LRL vector.
\begin{lem}
\label{newlem3}
Let $A^2 =\sum_{i=1}^N A_i^2$. Then
the following relation holds
\be
\label{Asq}
A^2
=   \Hgencoul\left(\mathcal{I}+S-\sfrac{(N-1)^2}{4}\right) + \gamma^2,
\ee
where  $\mathcal{I}$ is given by formula \eqref{Lsq}.
\end{lem}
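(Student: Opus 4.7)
The natural approach is to evaluate $A^2 = \sum_{i=1}^{N} A_i^2$ by exploiting the two alternative expressions from Proposition \ref{compAprop}. Set $B = \nabla^2 + \gamma/r$, $E = r\partial_r$, $K = E + (N-3)/2$, and $K' = K + 3 = E + (N+3)/2$, so that
$$A_i = -x_i B + \nabla_i K = -Bx_i + K'\nabla_i.$$
The cleanest expansion is to form each product $A_i\cdot A_i$ using form \eqref{Ai-2+} for the left factor and form \eqref{Ai-2} for the right factor, which places $x_i$ and $\nabla_i$ in adjacent positions. Summing over $i$ and applying the contractions $\sum_i x_i^2 = r^2$, $\sum_i \nabla_i^2 = \nabla^2$, $\sum_i x_i\nabla_i = E + S$, $\sum_i \nabla_i x_i = E + N - S$ from \eqref{x-pi}, and noting that $K, K'$ commute with $E$ and with $S \in \mathbb{C}W$, we obtain
$$A^2 = B r^2 B - B(E+S) K - K'(E+N-S) B + K'\nabla^2 K.$$

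To proceed, I would reorder the four summands using $[\nabla^2, r^2] = 4E + 2N$, $[E, \nabla^2] = -2\nabla^2$, $[E, 1/r] = -1/r$, $[B, x_i] = 2\nabla_i$, and $[B, \nabla_i] = \gamma x_i / r^3$, and then apply the identity
$$r^2\nabla^2 = E^2 + (N-2) E + \mathcal{I},$$
which follows from the radial decomposition \eqref{Hzero} of $\mathcal{H}_\gamma$ combined with $r^2\partial_r^2 = E^2 - E$. This substitution is what manifests the angular Casimir $\mathcal{I}$ inside the expression for $A^2$. After cancelling the terms purely polynomial in $E$ (which largely telescope because of the $K \leftrightarrow K'$ symmetry), the remaining angular part assembles into the perfect-square combination $L^2_{(N)} - (S - (N-1)/2)^2$, which by the definition \eqref{Lsq} of $\mathcal{I}$ is exactly $\mathcal{I} + S - (N-1)^2/4$.

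Meanwhile the overall prefactor $B$ is promoted to $\mathcal{H}_\gamma = B + \gamma/r$ by a $\gamma/r$ piece coming from the $[B, x_i]$ and $[B, \nabla_i]$ commutators, and the constant $\gamma^2$ falls out from the contraction $\gamma^2 \sum_i x_i^2/r^2 = \gamma^2$ hidden inside the squared Coulomb pieces of $A_i$.

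The main obstacle will be the detailed bookkeeping of the $\gamma$-linear remainders: one must show that the several $\gamma/r$ fragments produced by the various commutators combine to promote $B$ into $\mathcal{H}_\gamma$ exactly once, with nothing left over. A secondary technical point is the tracking of the $S^2$ contribution that appears at intermediate stages through $\sum_i \nabla_i x_i$; by \eqref{Lsq} this $S^2$ part is precisely what gets absorbed into $\mathcal{I}$ via the $-S(S - N + 2)$ term, so that only a single power of $S$ survives in the final answer.
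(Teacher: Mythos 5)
Your proposal follows essentially the same route as the paper: both expand $A^2$ by pairing the form \eqref{Ai-2+} on the left with \eqref{Ai-2} on the right, contract via \eqref{x-pi} to obtain the same four-term expression $Br^2B - B(r\partial_r+S)K - K'(r\partial_r+N-S)B + K'\nabla^2 K$, and then reduce using $[r\partial_r,\nabla^2]=-2\nabla^2$, $[r\partial_r,r^{-1}]=-r^{-1}$ and the radial identity $r^2\nabla^2=(r\partial_r)^2+(N-2)r\partial_r+\mathcal{I}$ coming from \eqref{Hzero}, sorting terms by powers of $\gamma$. The only cosmetic differences are that you package the answer as $L^2_{(N)}-\bigl(S-\frac{N-1}{2}\bigr)^2$ while the paper keeps $\mathcal{I}+S-\frac{(N-1)^2}{4}$ throughout, and that, contrary to your closing remark, no $S^2$ actually arises at intermediate stages on this route, since $S$ enters only linearly through $\sum_i x_i\nabla_i$ and $\sum_i\nabla_i x_i$.
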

\begin{proof}
Formulas \eqref{Ai-2}, \eqref{Ai-2+} together with the relation \eqref{x-pi} imply that
\be
\label{Asq2}
\begin{split}
A^2
=&\Big(\nabla^2+\frac\gamma r\Big)\Big(r^2\nabla^2+\gamma r\Big)+\Big( r\partial_r+ \sfrac {N+3}2\Big)\nabla^2\Big( r\partial_r+\sfrac {N-3}2\Big)
\\
&-\Big(\nabla^2+\frac\gamma r\Big)(r\partial_r+S)\Big( r\partial_r+\sfrac {N-3}2\Big)
\\
&-\Big( r \partial_r+ \sfrac {N+3}2\Big)(r\partial_r-S+N))\Big(\nabla^2+\frac\gamma r\Big).
\end{split}
\ee
Taking into account the first commutator in \eqref{commagain},
we can simplify \eqref{Asq2} to the form
\be
\label{quadrgam}
\begin{split}
A^2
=&\Big(\nabla^2+\frac\gamma r\Big)\Big(r^2\nabla^2+\gamma r\Big)+ \left( \sfrac{N-1}{2}\nabla^2
- \gamma \partial_r -\left(\nabla^2+\frac{\gamma}{r}\right)  S\right)\left( r\partial_r+ \sfrac {N-3}2\right)
\\
&- \left( r\partial_r+ \sfrac {N+3}2\right)(r\partial_r - S+N)\Big(\nabla^2 + \frac\gamma r\Big).
\end{split}
\ee
The right-hand side of the equality \eqref{quadrgam} is a second order polynomial in $\gamma$. Let us consider firstly terms which do not contain $\gamma$. They have the form
\be
\label{rearag}
\begin{split}
&\nabla^2\left(r^2 \nabla^2+\left( \sfrac{N-1}{2}- S\right)(r\partial_r+ \sfrac {N-3}2) \right) -
\left( r\partial_r+ \sfrac {N+3}2\right)(r\partial_r-S+N)) \nabla^2
\\
&=
\nabla^2\left(r^2 \nabla^2+\left( \sfrac{N-1}{2}- S\right)\left(r\partial_r+\sfrac {N-3}2\right)  -
\left(r\partial_r+\sfrac{N-1}2\right) (r\partial_r -  S+N-2)\right)
\end{split}
\ee
due to \eqref{commagain}. We rearrange \eqref{rearag}  further to   
\be
\label{icalhelp}
\nabla^2\left(r^2 \nabla^2 -(r \partial_r)^2 -   (N-2) r \partial_r + S -\sfrac{(N-1)^2}{4}\right) =  \nabla^2\left( {\mathcal I} + S - \sfrac{(N-1)^2}{4}\right)
\ee
since
${\mathcal I}=  r^2 \nabla^2 -  r^2 \partial_r^2 - (N-1) r \partial_r$ by formula \eqref{Hzero}, and
 $(r \partial_r)^2 = r^2 \partial_r^2 + r \partial_r$.

Let us now consider terms in \eqref{quadrgam} containing $\gamma$ in power one. We have
\be
\label{lingamma}
\begin{split}
r \nabla^2 + \nabla^2 r - (\partial_r +   r^{-1} S)\left(r \partial_r +\sfrac{N-3}{2}\right) -
\left(r \partial_r +  \sfrac{N+3}{2}\right)(r \partial_r + S + N )r^{-1}
\\
=r^{-1}\left(r^2 \nabla^2 + r \nabla^2  r - 2 (r \partial_r)^2 - 2  (N-1) r \partial_r + 2S - \sfrac{N^2-1}{2}\right),
\end{split}
\ee
where we used
$[\partial_r, r^{-1}]= - r^{-2}$. It follows from the formula  \eqref{Hzero} that
$$
[\nabla^2, r]=\left[\partial^2_r +\sfrac{N-1}{r} \partial_r, r\right] =  2  \partial_r + (N-1)r^{-1}.
$$
Therefore
\be
\label{onemore}
r \nabla^2 r = r^2 \nabla^2 +r [\nabla^2,r] = r^2 \nabla^2 + 2   r \partial_r +N-1.
\ee
Substituting expression  \eqref{onemore} into formula  \eqref{lingamma} we obtain the expression
\begin{equation*}
r^{-1}\left( 2 r^2 \nabla^2 - 2 (r \partial_r)^2 - 2  (N-2) r \partial_r + 2S -\sfrac{(N-1)^2}{2}\right)
=  2 r^{-1}\left({\mathcal I} + S - \sfrac{(N-1)^2}{4}\right)
\end{equation*}
by the relation \eqref{icalhelp}. The statement follows.
\end{proof}

The next statement generalises orthogonality relation between angular momenta and components of LRL   vector in three-dimensional space.

\begin{lem}
\label{orthogamRL}
For any $i,j,k=1, \ldots, N$ we have
$$
L_{ij}A_k + L_{jk} A_i + L_{ki} A_j = A_k L_{ij} + A_i L_{jk}+ A_j L_{ki} = 0.
$$
\end{lem}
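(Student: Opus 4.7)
The plan is to reduce each of the two identities to Lemma \ref{orthogrel} by invoking the appropriate expression for $A_k$ from Proposition \ref{compAprop}. The key observation is that in both \eqref{Ai-2} and \eqref{Ai-2+}, the index-$k$ dependence of $A_k$ is carried entirely by the factor $x_k$ or $\nabla_k$, while the accompanying scalar operators $\nabla^2+\gamma/r$ and $r\partial_r+\mathrm{const}$ are independent of $k$.

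Explicitly, from \eqref{Ai-2} I have
\begin{equation*}
A_k = -x_k\Big(\nabla^2 + \tfrac{\gamma}{r}\Big) + \nabla_k\Big(r\partial_r + \tfrac{N-3}{2}\Big).
\end{equation*}
Multiplying by $L_{ij}$ on the left and cycling $(i,j,k)\mapsto(j,k,i)\mapsto(k,i,j)$, the common $k$-independent right-hand factors pull out, yielding
\begin{equation*}
\begin{split}
L_{ij}A_k + L_{jk}A_i + L_{ki}A_j = {}&-\bigl(L_{ij}x_k + L_{jk}x_i + L_{ki}x_j\bigr)\bigl(\nabla^2+\tfrac{\gamma}{r}\bigr) \\
&+ \bigl(L_{ij}\nabla_k + L_{jk}\nabla_i + L_{ki}\nabla_j\bigr)\bigl(r\partial_r + \tfrac{N-3}{2}\bigr).
\end{split}
\end{equation*}
Both cyclic sums in parentheses vanish by the left-hand part of Lemma \ref{orthogrel}, and the first identity follows.

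The second identity $A_k L_{ij} + A_i L_{jk} + A_j L_{ki} = 0$ is obtained by the mirror computation: start from \eqref{Ai-2+}, multiply by $L_{ij}$ on the right, cycle, and apply the right-hand part of Lemma \ref{orthogrel}. There is essentially no obstacle here: the argument uses only distributivity, not any commutation of $\nabla^2$, $r$, or $r\partial_r$ with $L_{ij}$, because the scalar factors already sit on the far right in \eqref{Ai-2} (respectively the far left in \eqref{Ai-2+}) and do not depend on the cycled index. The only preparatory work — expressing $A_i$ in the two bilinear forms and proving the orthogonality relations for $x_k$ and $\nabla_k$ — has already been carried out in Proposition \ref{compAprop} and Lemma \ref{orthogrel}.
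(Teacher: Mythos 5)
Your proof is correct and follows essentially the same route as the paper's: both reduce the identity to Lemma \ref{orthogrel} via the factored forms of $A_i$ from Proposition \ref{compAprop}. The only (minor) difference is that by using \eqref{Ai-2} for the left-multiplied cyclic sum and \eqref{Ai-2+} for the right-multiplied one you avoid invoking the commutativity $[\nabla^2,L_{ij}]=[r^{-1},L_{ij}]=[r\partial_r,L_{ij}]=0$, which the paper's one-line proof cites instead.
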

\begin{proof}
The statement follows from Lemma \ref{orthogrel}  and  commutativity
$$
[\nabla^2, L_{ij}]=[r^{-1}, L_{ij}]=[r \partial_r, L_{ij}]=0.
$$
\end{proof}

\begin{rem}\label{remark1}
Commutativity property of the non-local Hamiltonian $\Hgencoul$ established in Propositions \ref{prop32}, \ref{prop33} allows to obtain integrals of the local Hamiltonian \cite{CalCoul}
\be
\label{gcoulloc}
H_{\gamma, g}^{loc} =
\Delta
-  \sum_{\alpha\in {\mathcal R}_+}\frac{g_\alpha(g_\alpha-1) (\alpha, \alpha)}{x_\alpha^2} +  \frac{2\gamma}{r},
\ee
similarly to the case $\gamma=0$ \cite{Heckman}. Namely, let $\rm{Res}(B)$ be restriction of a $W$-invariant  operator  $B$ to $W$-invariant functions. Then  $H_{\gamma, g}^{loc} = \rm{Res} (\Hgencoul)$. Furthermore, if 
$P$ is a polynomial in the non-commuting variables  $A_i, L_{kl}$ and in elements $w\in W$ such that $P$ is $W$-invariant then it follows that $[H_{\gamma, g}^{loc}, {\rm Res}(P)]=0$. Note that integrals corresponding to two $W$-invariant polynomials $P_1, P_2$ do not commute one with another in general.
\end{rem}

\begin{rem}
One may also adopt approach of \cite{Heckman}, \cite{feigin} to construct shift operators for the operator \eqref{gcoulloc}, namely, differential operators $\mathcal D$ such that
\be
\label{shift}
H_{\gamma, g+1}^{loc} \circ {\mathcal D}=  {\mathcal D}\circ H_{\gamma, g}^{loc}.
\ee
These operators arise from the application of $W$-anti-invariant operators $P$  to $W$-invariants. In the case of $P$ not dependent on variables $A_i$ such shift operators of degree 0 were constructed in \cite{feigin} for $\gamma=0$, and they satisfy intertwining relation \eqref{shift} for any $\gamma$.
\end{rem}

\section{Algebraic structure}
\label{section-BR}

Let us introduce the following associative algebra $R_{g,\gamma}(W)$ over $\C$ with identity $1\in W$. It is generated by the elements $\mathcal A_i$, $\mathcal L_{kl}$, $\mathcal H$ and $\C W$, where $1\le i,k,l \le N$. It is convenient to define
$$
\mathcal A_\xi = \sum_{i=1}^N \xi_i {\mathcal A}_i,  \quad \mathcal L_{\xi \eta} =\sum_{i,j=1}^N \xi_i \eta_j \mathcal L_{ij},
$$
where $\xi=(\xi_1, \ldots, \xi_N), \eta = (\eta_1, \ldots, \eta_N) \in \C^N$. Then defining relations of the algebra $R_{g,\gamma}(W)$ are
\begin{align}
\label{rep1}
&\mathcal L_{i j} = - \mathcal L_{j i}, \\
\label{rep2}
& w \mathcal A_\xi = \A_{w(\xi)} w, \quad w \mathcal L_{\xi \eta} = \mathcal L_{w(\xi) w(\eta)} w,  \\
\label{rep3extra}
& [\mathcal H, \A_i] = [\mathcal H, \mL_{ij}]=0, \\
\label{rep3}
& \sum_{i=1}^N \A_i^2=  \mathcal H \left(\sum_{i<j}^N \mathcal L_{ij}^2 - S(S-N+1) - \sfrac{(N-1)^2}{4}\right) + \gamma^2,\\
\label{rep4}
& [\A_i, \A_j] =   \mathcal H \mathcal L_{ij}, \\
\label{rep5}
& [\A_i, \mathcal L_{kl}]= \A_{l}S_{ki} -  \A_{k}S_{li}, \\
\label{rep6}
& [\mathcal L_{ij}, \mathcal L_{kl}] =  \mathcal L_{il} S_{jk}  + \mathcal L_{jk} S_{il} - \mathcal L_{ik} S_{jl} -  \mathcal L_{jl} S_{ik},\\
\label{rep7}
& \mL_{ij} \mL_{kl} + \mL_{jk} \mL_{il}+\mL_{ki}\mL_{jl} =   \mL_{ij} S_{kl} + \mL_{jk} S_{il} + \mL_{ki} S_{jl}, \\
\label{rep8}
& \mL_{ij} \A_{k} + \mL_{jk} \A_i+\mL_{ki} \A_j =0,
\end{align}
where $i,j,k,l=1, \ldots, N$. Note that algebras $R_{g,\gamma}(W)$ are isomorphic for all $\gamma\ne 0$ as one can rescale generators
$\A_{i} \to \gamma \A_{i}$, $\mathcal H \to \gamma^2 \mathcal H$.

\begin{rem}
Let $g=0$. Then $S_{ij}=\delta_{ij}$.  Let us also set ${\mathcal H}=c \in \C^\times$. Consider universal enveloping algebra $U(so(N+1))$ with generators $K_{ij} = E_{ij} - E_{ji}$ where $E_{ij}$ is the matrix unit at $(ij)$-place, $1\le i < j\le N+1$.
For  $\xi=(\xi_1,\ldots, \xi_{N+1}), \eta=(\eta_1,\ldots, \eta_{N+1})$ define $E_{\xi \eta} = \sum_{i,j=1}^{N+1} \xi_i \eta_j E_{ij}$. Consider the skew product $U(so(N+1))  \rtimes  W$ where group $W$ acts naturally in $\C^N$
embedded into the subspace of $\C^{N+1}$
with the last coordinate of elements being 0, $w E_{\xi \eta} = E_{w(\xi), w(\eta)}w$. Due to relations \eqref{rep4} -- \eqref{rep6}  we have a surjective homomorphism $\psi\colon U(so(N+1))\rtimes W \to R_{g, \gamma}(W)/({\mathcal H} - c)$ given by $\psi(K_{ij}) = {\mathcal L}_{ij}$ for $1\le i<j \le N$, 
$\psi(K_{i, N+1})  = (-c)^{1/2} {\mathcal A}_i$  ($1\le i \le N$), $\psi(w) = w$ ($w \in W$). Relations \eqref{rep3}, \eqref{rep7}, \eqref{rep8} then define ideal $I = \text{ Ker } \psi \subset  U(so(N+1))\rtimes W$ so that $U(so(N+1))\rtimes W/ I \cong R_{g, \gamma}(W)/({\mathcal H} - c)$.
\end{rem}

Let us consider the algebra $D$ of operators on functions which is generated by differential operators with rational coefficients, function $r^{-1}$ and the group $W$.
\begin{prop}
\label{rhohom}
There exists a homomorphism $\rho\colon R_{g,\gamma}(W) \to D$ given by
\be
\label{rho}
  \rho(\mL_{ij}) = L_{ij},
  \quad \rho(\A_{i}) = A_i,
  \quad \rho({\mathcal H}) = {\mathcal H}_\gamma,
\quad
  \rho(w) = w,
\ee
where $w\in W$, $i,j=1,\ldots,N$.
\end{prop}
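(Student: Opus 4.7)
The plan is to show that each of the defining relations \eqref{rep1}--\eqref{rep8} of $R_{g,\gamma}(W)$ is satisfied when the generators are replaced by the concrete operators in \eqref{rho}. Because the free associative algebra generated by $\mathcal{A}_i,\mathcal{L}_{kl},\mathcal{H}$ and $\mathbb{C}W$ modulo these relations \emph{is} $R_{g,\gamma}(W)$ by definition, once every relation is verified in the image, the universal property of the quotient automatically produces the homomorphism $\rho$. So the proof reduces to a dictionary between the abstract relations and the results already proven in Section~\ref{section-AR}.

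Concretely, I would match the relations as follows. Relation \eqref{rep1} is immediate from the definition \eqref{Mij}. For the $W$-equivariance \eqref{rep2}, the angular part is \eqref{rel3}, and the part for $A_\xi$ follows from the closed form \eqref{Ai-2}: the operators $\nabla^2$, $r$, $r\partial_r$ are $W$-invariant, while $wx_\xi w^{-1}=x_{w(\xi)}$ and $w\nabla_\xi w^{-1}=\nabla_{w(\xi)}$ by a direct check from \eqref{dunkl} and $W$-invariance of $g$. The commutativity \eqref{rep3extra} is Propositions~\ref{prop32} and \ref{prop33}. Relation \eqref{rep4} is Lemma~\ref{newlem2}, \eqref{rep5} is Lemma~\ref{newlem}, \eqref{rep6} is \eqref{comLL}, \eqref{rep7} is \eqref{cros}, and \eqref{rep8} is Lemma~\ref{orthogamRL}.

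The one relation that needs a small rewriting is \eqref{rep3}. Here I would start from Lemma~\ref{newlem3}, which gives
\[
A^2=\mathcal{H}_\gamma\Bigl(\mathcal{I}+S-\sfrac{(N-1)^2}{4}\Bigr)+\gamma^2,
\]
and then substitute $\mathcal{I}=L^2_{(N)}-S(S-N+2)$ from \eqref{Lsq}. The combination $-S(S-N+2)+S$ simplifies to $-S(S-N+1)$, which matches the right-hand side of \eqref{rep3}. This is a purely algebraic manipulation inside $\mathbb{C}W$, so no new analytic input is needed.

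There is no real obstacle in this proposition: every nontrivial computation has already been carried out in Section~\ref{section-AR}, and the task is one of bookkeeping—checking that the abstract defining relations of $R_{g,\gamma}(W)$ have been made to coincide, via the chosen normalisations, with the previously established identities among the operators $L_{ij}$, $A_i$, $\mathcal{H}_\gamma$, and $w\in W$. The mildest care is in \eqref{rep3}, where one must align the two different forms of the quadratic Casimir, and in \eqref{rep2} for the LRL vector, where one uses the form \eqref{Ai-2} to reduce $W$-equivariance to the standard covariance of $x_i$ and $\nabla_i$.
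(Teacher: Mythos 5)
Your proposal is correct and follows essentially the same route as the paper: the paper's proof likewise verifies each defining relation \eqref{rep1}--\eqref{rep8} in the image by citing the definitions, Propositions \ref{prop32}--\ref{prop33}, Lemmas \ref{newlem}, \ref{newlem2}, \ref{newlem3}, \ref{orthogamRL}, and relations \eqref{comLL}, \eqref{cros}. Your extra details (the $W$-equivariance of $A_\xi$ via \eqref{Ai-2} and the identity $-S(S-N+2)+S=-S(S-N+1)$ aligning Lemma \ref{newlem3} with \eqref{rep3}) are correct and simply make explicit what the paper leaves to the reader.
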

\begin{proof}
One has to establish relations \eqref{rep1} - \eqref{rep8} with $\mL_{ij}$ replaced with $L_{ij}$,  $\A_i$ replaced with $A_i$, and $\cal H$ replaced with ${\cal H}_\gamma$. The corresponding relations \eqref{rep1} - \eqref{rep5} follow from the definitions of $A_i, L_{ij}$ and ${\cal H}_\gamma$, Propositions \ref{prop32}, \ref{prop33} and Lemmas \ref{newlem}, \ref{newlem2}, \ref{newlem3}. The corresponding relations \eqref{rep6}, \eqref{rep7} follow from \eqref{comLL}, \eqref{cros}. Finally, the relation \eqref{rep8}
follows from  Lemma \ref{orthogamRL}.
\end{proof}
Following \cite{fh} we will call relations \eqref{rep7}, \eqref{rep8} {\it crossing} relations. This terminology has the following background. Let us consider an element from the algebra $R_{g,\gamma}(W)$ of the form
\begin{equation}
\label{monom}
\mL_{i_1 j_1}^{n_1}\ldots \mL_{i_k j_k}^{n_k},
\qquad
\end{equation}
where $k \in \N$, $1\le i_s<j_s \le N$ for $s=1, \ldots, k$, $n_s \in  \N$, and we assume that pairs of indecies $(i_s, j_s)$ are all different.
Let us plot the integral points from $1$ to $N$ on the real line
 and let us represent element \eqref{monom} geometrically
by connecting the points $i_s$ and $j_s$ by $n_s$
arcs in the upper half-plane.
We say that element
\eqref{monom}
has {\it no crossings} if the corresponding arcs  do not intersect.
This property  can also be restated as follows. Suppose that $i_s < i_{s'} < j_s$ for some $1\le s, s' \le k$. Then this implies  that $j_{s'} \le j_s$.
Now let us take four indecies $1\le i<j<k<l\le N$. Then exactly two of the three elements  $\mL_{ij} \mL_{kl}$,  $\mL_{jk} \mL_{il}$, $\mL_{ik}\mL_{jl}$, namely, elements $\mL_{ij} \mL_{kl}, \mL_{jk} \mL_{il}$   have no crossings.

Furthermore, similarly to elements of the form \eqref{monom} let us consider elements from $R_{g, \gamma}(W)$ of the form
\begin{equation}
\label{monom-with-A}
\mL_{i_1 j_1}^{n_1}\ldots \mL_{i_k j_k}^{n_k} \A_{r_1}^{m_1}\ldots\A_{r_l}^{m_l},
\qquad
\end{equation}
where $i_s, j_s, n_s$ are as in \eqref{monom}, $l\in \N$, $1\le r_t\le N$ and $m_t\in \N$ for $t=1, \ldots, l$ ($r_t\ne r_{t'}$ if $t\ne t'$).
Let us represent this element geometrically as follows. Let us plot the integral points on the real line from $1$ to $N+1$. Let us connect pairs of points $(i_s, j_s)$ by $n_{s}$
arcs in the upper half-plane as we did for the element \eqref{monom}.
Let us also  connect pairs of points $(r_t, N+1)$ by $m_t$ arcs in the upper half-plane ($t=1, \ldots, l$).  We say that element \eqref{monom-with-A}  has {\it no crossings} if the corresponding arcs do not intersect.
Then for any triple of  indecies $1\le i<j<k\le N$ exactly two of the three elements  $\mL_{ij} \A_{k}$,  $\mL_{jk} \A_i$, $\mL_{ik} \A_j$ in \eqref{rep8}, namely, elements  $\mL_{ij} \A_{k}$,  $\mL_{jk} \A_i$  have no crossings.

More generally, we consider elements of $R_{g, \gamma}(W)$ of the form \eqref{monom-with-A} multiplied (on the right) by ${\cal H}^q w$, where $q\ge 0$ and $w \in W$. We say that such an element has no crossings if the corresponding element \eqref{monom-with-A} has no crossings, and we represent such an element geometrically in the same way as element \eqref{monom-with-A} for any $q, w$.

Similarly, an element
$L_{i_1 j_1}^{n_1}\ldots L_{i_k j_k}^{n_k} A_{r_1}^{m_1}\ldots A_{r_l}^{m_l} {\cal H}^q w\in D$ is said to have no crossings
($m_t, q\ge 0$)
if the corresponding element
\begin{equation}
\label{standardform}
\mL_{i_1 j_1}^{n_1}\ldots \mL_{i_k j_k}^{n_k} \A_{r_1}^{m_1}\ldots\A_{r_l}^{m_l} {\cal H}^q w
\end{equation}
in  $R_{g, \gamma}(W)$ has no crossings.

It follows from the relations \eqref{rep7}, \eqref{rep8} that any quadratic polynomial in $\mL_{ij}, \A_r$ can be rearranged as a linear combination of elements from $R_{g, \gamma}(W)$ with  no crossings. Furthermore, any element of $R_{g, \gamma}(W)$  can be represented as a linear combination of elements of the form \eqref{standardform}  with no crossings.


We will also use similar terminology for the classical version of above considerations. More exactly, let us consider classical angular momenta $M_{ij} = x_i p_j - x_j p_i$, where $1\le i, j \le N$, and all the variables $x_1, p_1, \ldots, x_N, p_N$ commute. Consider monomial
\begin{equation}
\label{monom-class}
M_{i_1 j_1}^{n_1}\ldots M_{i_k j_k}^{n_k},
\end{equation}
where as in \eqref{monom} $k \in \N$, $1\le i_s<j_s \le N$ for $s=1, \ldots, k$, $n_s \in  \N$, and we assume that pairs of indecies $(i_s, j_s)$ are all different.
We say that element  \eqref{monom-class} has no crossings if the corresponding element \eqref{monom} has no crossings. Equivalently, 
 the arcs constructed in the same way as for the monomial \eqref{monom} do not intersect.
\begin{lem}
Any monomial \eqref{monom-class} 
can be represented as a linear combination of monomials in the classical angular momenta which have no crossings.
\end{lem}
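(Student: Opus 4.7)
The plan is to use the classical Plücker identity for the $2\times 2$ minors $M_{ij}=x_ip_j-x_jp_i$: for any four indices $i,j,k,l$,
\[
M_{ij}M_{kl} - M_{ik}M_{jl} + M_{il}M_{jk} = 0,
\]
which either follows by direct expansion or is read off from relation~\eqref{rep7} in the classical limit where all factors commute and $S_{ij}=\delta_{ij}$, so that the right-hand side vanishes on distinct indices. For $i<j<k<l$ this rewrites the crossing product $M_{ik}M_{jl}$ as the sum of the disjoint product $M_{ij}M_{kl}$ and the nested product $M_{il}M_{jk}$, i.e.\ each application of the identity replaces a single crossing pair of factors by two non-crossing pairs.

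To organise the induction, attach to any monomial of the form~\eqref{monom-class} a crossing number $\chi(M)\in\mathbb{Z}_{\ge 0}$ equal to the number of unordered pairs of factor-instances of $M$, counted with multiplicity, whose arcs cross in the sense introduced before the lemma. A monomial has no crossings precisely when $\chi(M)=0$. Proceed by induction on $\chi$. If $\chi(M)>0$, pick one crossing pair of factors $M_{ik},M_{jl}$ with $i<j<k<l$, write $M = M_{ik}M_{jl}\cdot R$ where $R$ is the product of the remaining factors, and apply the Plücker identity to obtain $M=M'+M''$ with $M'=M_{ij}M_{kl}\cdot R$ and $M''=M_{il}M_{jk}\cdot R$.

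The substance of the argument is to verify $\chi(M')<\chi(M)$ and $\chi(M'')<\chi(M)$. The replacement removes one crossing (that of the two selected factors) and introduces no crossing inside either of the new pairs, since $\{(i,j),(k,l)\}$ and $\{(i,l),(j,k)\}$ are by construction non-crossing. Thus termination reduces to the following per-factor statement, which I expect to be the main technical point: for each remaining factor $M_{pq}$ in $R$, the number of crossings of $(p,q)$ with the new pair $\{(i,j),(k,l)\}$, and separately with $\{(i,l),(j,k)\}$, does not exceed the number of crossings of $(p,q)$ with the original pair $\{(i,k),(j,l)\}$. I would verify this by a finite case analysis on the location of the pair $\{p,q\}$ relative to the five open intervals determined by $i<j<k<l$, together with the subcases where $p$ or $q$ coincides with one of $i,j,k,l$. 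In every configuration a direct inspection gives the required inequality, with equality occasionally occurring but never strict excess. Summing over factors of $R$ and accounting for the one eliminated crossing between the chosen factors yields $\chi(M'),\chi(M'')\le \chi(M)-1$, so the induction on $\chi(M)$ closes and the lemma follows.
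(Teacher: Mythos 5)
Your proof is correct and follows essentially the same route as the paper's: both resolve a crossing pair of factors via the classical Pl\"ucker identity $M_{ik}M_{jl}=M_{ij}M_{kl}+M_{il}M_{jk}$ and induct on the number of crossings. The only difference is that you spell out the per-factor monotonicity of the crossing count which the paper dismisses as ``easy to see''; that check is indeed routine and your case analysis closes it.
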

\begin{proof}
Let us consider a monomial which contains a factor $M_{ij}M_{kl}$ where $i<k< j<l$. It is easy to see that by applying the crossing relation
$$
M_{ij}M_{kl} = M_{ik} M_{jl} - M_{il}M_{jk}
$$
we obtain two monomials such that each of them has less intersecting pairs of arcs than the original monomial. The statement follows by induction.
\end{proof}

\begin{lem}\label{mind}
Different
monomials \eqref{monom-class} 
are linearly independent provided that they have  no crossings.
\end{lem}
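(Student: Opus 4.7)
The plan is to prove linear independence by a leading-monomial argument in the commutative polynomial ring $\C[x_1,\ldots,x_N,p_1,\ldots,p_N]$. Fix the pure lexicographic term order induced by the variable order
\[
x_1 > x_2 > \cdots > x_N > p_1 > p_2 > \cdots > p_N.
\]
For $i<j$, the leading monomial of $M_{ij} = x_i p_j - x_j p_i$ is $x_i p_j$ (since $x_i$ outranks $x_j$), so for the monomial $N(\mathbf{n}) := M_{i_1 j_1}^{n_1}\cdots M_{i_k j_k}^{n_k}$ of \eqref{monom-class} with $i_s<j_s$, the leading monomial is $L(\mathbf{n}) := \prod_{s=1}^{k} x_{i_s}^{n_s} p_{j_s}^{n_s}$, and it occurs with coefficient $+1$.

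The core combinatorial claim is that distinct non-crossing tuples $\mathbf{n}$ yield distinct leading monomials. Writing $L(\mathbf{n}) = \prod_{a} x_a^{d^x_a}\prod_{b} p_b^{d^p_b}$, the exponent data
\[
d^x_a = \sum_{s:\, i_s = a} n_s, \qquad d^p_b = \sum_{s:\, j_s = b} n_s
\]
record the total multiplicity of arcs originating at $a$ and terminating at $b$ respectively. I claim that these two sequences determine the non-crossing multi-arc configuration uniquely. The idea is to encode any such configuration as a balanced bracket word by scanning indices $a = 1, 2, \ldots, N$ from left to right: at each $a$, first emit $d^p_a$ closing brackets and then $d^x_a$ opening brackets. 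This ordering is forced because the condition $i<j$ for every arc $(i,j)$ forbids an opening bracket at position $a$ from pairing with a closing bracket at the same position. For a non-crossing multi-pairing, the matching is then exactly the canonical innermost-nested bracket matching; an induction on $N$ (or on the total number of arcs), peeling off an adjacent ``()'' pair corresponding to an innermost arc, shows this is the only non-crossing matching of the resulting word. Hence the non-crossing tuple $\mathbf{n}$ is recovered from its degree data $(d^x_a),(d^p_b)$.

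With uniqueness of leading monomials in hand, the lemma follows by the standard leading-monomial argument: from a supposed nontrivial relation $\sum_{\mathbf{n}} c_{\mathbf{n}} N(\mathbf{n}) = 0$, select $\mathbf{n}^*$ maximizing $L(\mathbf{n}^*)$ among those with $c_{\mathbf{n}^*} \ne 0$. Since the $L(\mathbf{n})$ are all distinct, for every $\mathbf{n} \ne \mathbf{n}^*$ with $c_{\mathbf{n}} \ne 0$ one has $L(\mathbf{n}) < L(\mathbf{n}^*)$, so no monomial in the full expansion of $N(\mathbf{n})$ reaches $L(\mathbf{n}^*)$. Thus $L(\mathbf{n}^*)$ appears in the left-hand side only through $N(\mathbf{n}^*)$, with coefficient $c_{\mathbf{n}^*}$, forcing $c_{\mathbf{n}^*} = 0$ and yielding the required contradiction.

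The main obstacle is the combinatorial uniqueness step in the middle paragraph: it is intuitively standard but needs a careful inductive argument because the bracket word can simultaneously contain openings and closings at the same position and the arc multiplicities $n_s$ can be arbitrary. Once the bracket-matching uniqueness is secured, the rest of the proof is the routine leading-monomial tool.
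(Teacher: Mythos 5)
Your proof is correct, but it follows a genuinely different route from the paper's. The paper argues by induction on the number of points: after reducing to the case where not every monomial is divisible by $M_{12}$ and fixing the multidegree in the pairs $(x_s,p_s)$, it applies the substitution $x_1\mapsto x_2$, $p_1\mapsto p_2$ (so that $M_{12}\mapsto 0$ and $M_{1i}\mapsto M_{2i}$) and shows that a non-crossing monomial of fixed degrees $k_1,k_2$ at the points $1,2$ is uniquely reconstructible from its image, which forces all coefficients of monomials not containing $M_{12}$ to vanish. You instead exhibit pairwise distinct leading monomials for a lex order, reducing the lemma to the purely combinatorial fact that a non-crossing arc configuration is determined by the two sequences counting arcs opening and closing at each point. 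The two arguments rest on essentially the same combinatorial kernel --- the paper's ``unique preimage with prescribed degrees $k_1,k_2$'' is your degree-reconstruction claim localized at the two merged points --- but yours isolates it cleanly and yields the slightly stronger, reusable statement that the non-crossing monomials have distinct leading terms. The one step that needs to be written out for full rigor is the one you flag yourself, and it splits into two sub-claims: (a) any non-crossing arc configuration, with the canonical within-position ordering of brackets (all closings before all openings, and arcs with farther partner placed outermost), induces a non-crossing matching of the bracket word that depends only on the degree data; and (b) a balanced bracket word admits a unique non-crossing matching. The innermost-pair peeling induction you indicate proves (b); claim (a) deserves its own sentence, since arcs sharing an endpoint do not cross as arcs but must be nested consistently once the endpoint is split into several bracket slots. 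With those two points made explicit, the proof is complete.
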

\begin{proof}
Let $i=(i_1, \ldots, i_k)$,  $j=(j_1, \ldots, j_k)$,  $n= (n_1, \ldots, n_k)$.
Suppose that monomials \eqref{monom-class} with no crossings admit linear dependence
\begin{equation}
\label{relnl}
\sum_{i,j,n; k} a_{i j n}^{(k)} M_{i_1 j_1}^{n_1} \ldots M_{i_k j_k}^{n_k} =0,
\end{equation}
where $a_{i j n}^{(k)}\in \C^\times$. We can assume that not all the monomials in \eqref{relnl} are divisible by $M_{12}$ as otherwise we can divide this polynomial relation by a suitable power of $M_{12}$. Furthermore, note that each 
angular momentum  
$M_{i_r j_r}$ is homogeneous in any pair of variables $(x_s, p_s)$. Therefore we can assume that all monomials in the relation (\ref{relnl}) have the same multidegree 
$(k_1, \ldots, k_N)$ 
in the pairs of variables $(x_{s}, p_{s})$ 
($s=1,\ldots, N$). Consider now the homomorphism $\varphi$
from the algebra generated by angular momenta in $N+1$ variables to the algebra generated by angular momenta in the last $N$ variables
given by $\varphi(M_{1 i}) = M_{2i}$ for any $i\ge 3$, $\varphi(M_{12})=0$, and $\varphi(M_{ij})=M_{ij}$ if $i,j \ge 3$. The relation (\ref{relnl})  implies that
\begin{equation}
\label{relnlphi}
\sum_{i,j,n; k} a_{i j n}^{(k)} \varphi(M_{i_1 j_1})^{n_1} \ldots \varphi(M_{i_k j_k})^{n_k} =0,
\end{equation}
and each non-zero monomial $\varphi(M_{i_1j_1})^{n_1}\ldots \varphi(M_{i_k j_k})^{n_k}$ has no crossings. Furthermore, given a nonzero monomial  $\overline{M}= \varphi(M_{i_1j_1})^{n_1}\ldots \varphi(M_{i_k j_k})^{n_k}$ in \eqref{relnlphi} there exists a unique monomial $M$ in \eqref{relnl} such that $\varphi(M)=\overline{M}$ (namely, $M= M_{i_1j_1}^{n_1}\ldots M_{i_k j_k}^{n_k}$).
Indeed,  monomial $M$ has no crossings and it has fixed degrees $k_1, k_2$.
The latter condition means geometrically that monomial $M$ has $k_1$ arcs attached to the point 1 and $k_2$ arcs  attached to the point 2, which allows to find monomial $M$ uniquely for a given $\overline{M}$.
It follows by induction in 
$N$ that coefficients at monomials not containing $M_{12}$ in the sum \eqref{relnl} are all zero. This contradiction implies the statement.
\end{proof}

For a given $m\in \N$ let us introduce the following combination of classical angular momenta:
$$
M^2_{(m)} = \sum_{1\le i<j \le m} M_{ij}^2 = \sum_{i=1}^m x_i^2 \sum_{i=1}^m p_i^2 - \left(\sum_{i=1}^m x_i p_i \right)^2.
$$

Consider the algebra $B=B_{N+1} \subset \C[x_1, \ldots, x_{N+1}, p_1, \ldots, p_{N+1}]$ generated by classical angular momenta $M_{ij}$, $1\le i<j\le N+1$.
Let $I=I_{N+1}$ be the ideal in $B$ generated by the element $M^2_{(N+1)}$.

\begin{lem}\label{lemmom}
A linear basis in the quotient
 $B/I$
 is given by the coset classes of different monomials $M_{i_1 j_1}^{n_1} \ldots M_{i_k j_k}^{n_k}$ ($i_s<j_s$ for all $s=1,\ldots, k$) such that they have no crossings and 
$M_{N, N+1}$ has power at most $1$.
\end{lem}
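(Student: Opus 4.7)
My plan is to combine a Hilbert series dimension count with a spanning argument built on the preceding lemma. First, $B = B_{N+1}$ is the homogeneous coordinate ring of $\mathrm{Gr}(2, N+1)$ in the Pl\"ucker embedding, in particular an integral domain (alternatively, $B \subset \C[x_1,\dots,p_{N+1}]$ is a subring of a polynomial ring). Since $M^2_{(N+1)} \ne 0$ in $B$, multiplication by $M^2_{(N+1)}$ is injective, giving the graded exact sequence
\begin{equation*}
0 \to B[-2] \xrightarrow{\,\cdot\,M^2_{(N+1)}\,} B \to B/I \to 0,
\end{equation*}
and hence $\dim_d B/I = \dim_d B - \dim_{d-2} B$. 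By Lemma~\ref{mind} together with the preceding lemma, $\dim_d B$ equals the number $C_d$ of no-crossings monomials of degree $d$ on $N+1$ points. Because $M_{N,N+1}$ connects adjacent points and so crosses no other arc, every no-crossings monomial factors uniquely as $M_{N,N+1}^k m''$ with $m''$ a no-crossings monomial not involving $M_{N,N+1}$; writing $b_e$ for the number of such $m''$ of degree $e$, the identity $C_d = \sum_{k \ge 0} b_{d-k}$ gives $b_d = C_d - C_{d-1}$. The count of proposed basis elements of degree $d$ is then $b_d + b_{d-1} = C_d - C_{d-2} = \dim_d B/I$.

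For spanning, the preceding lemma reduces the problem to rewriting any no-crossings monomial $m = M_{N,N+1}^n m''$ with $n \ge 2$ modulo $I$. The relation $M^2_{(N+1)}\in I$ gives
\begin{equation*}
m \equiv -\sum_{(i,j) \ne (N, N+1)} M_{N,N+1}^{n-2}\, M_{ij}^2 \,m'' \pmod I,
\end{equation*}
and each monomial on the right is rewritten in no-crossings form via the Pl\"ucker (crossing) relation. Iterating this procedure together with further applications of the ideal relation eventually replaces $m$ by a linear combination of no-crossings monomials with $M_{N,N+1}$-power at most $1$, and combined with the matching dimension count above this yields the basis property.

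The main obstacle is termination of the joint reduction: resolving a crossing via Pl\"ucker can introduce a fresh copy of $M_{N,N+1}$, specifically when the crossing involves arcs $(a,N)$ and $(c,N+1)$ with $a<c<N$, so the power of $M_{N,N+1}$ is not monotone under the rewrite. To control this I plan to set up a degree-lexicographic monomial order in which $M^2_{(N+1)}$ has leading term $M_{N,N+1}^2$ and the Pl\"ucker relations have their crossing monomials as leading terms, and to verify that together these form a Gr\"obner basis of the defining ideal of $B/I$, so that the irreducible monomials are exactly the proposed basis. As a fallback, since the dimension count already pins down the cardinality of a basis in each degree, it is enough to verify linear independence of the cosets directly, which I would do by expanding any alleged dependence $\sum c_m m = f\cdot M^2_{(N+1)}$ in the no-crossings basis of $B$ provided by Lemma~\ref{mind} and tracking which no-crossings monomials can appear on each side.
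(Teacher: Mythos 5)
Your dimension count is correct and is a genuinely different ingredient from the paper's argument: since $B$ is a domain and $M^2_{(N+1)}\neq 0$, the graded short exact sequence gives $\dim_d B/I=\dim_d B-\dim_{d-2}B$, and your enumeration of no-crossings monomials (using that the arc $(N,N+1)$ joins adjacent points and hence never crosses anything) correctly shows that the proposed set has the right cardinality in each degree. This reduces the lemma to proving \emph{either} spanning \emph{or} linear independence, which is a real simplification over having to prove both.

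However, the proposal stops short of actually proving either one, and this is a genuine gap. For spanning you correctly identify that Pl\"ucker rewriting can reintroduce $M_{N,N+1}$ (e.g.\ $M_{aN}M_{c,N+1}=M_{ac}M_{N,N+1}+M_{a,N+1}M_{cN}$ for $a<c<N$), but the Gr\"obner-basis resolution is only announced, not verified; and the linear-independence fallback (``tracking which monomials can appear on each side'') is exactly the point where an idea is needed. The missing idea, which the paper supplies, is to grade by the total degree in the two pairs of variables $(x_N,p_N)$ and $(x_{N+1},p_{N+1})$ --- equivalently, by the number of arc-endpoints at $N$ and $N+1$. Every $M_{ij}$ is homogeneous for this grading, so the Pl\"ucker relations preserve it, while the top component of $M^2_{(N+1)}$ is exactly $M_{N,N+1}^2$. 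This one observation closes both of your routes. For independence (the paper's proof): the top components of an alleged relation $Q(M)=M^2_{(N+1)}q(M)$ satisfy $Q_0=M_{N,N+1}^2q_0$, where $Q_0$ is a nonzero combination of no-crossings monomials with $M_{N,N+1}$-power at most $1$ and $M_{N,N+1}^2q_0$ is a combination of no-crossings monomials with power at least $2$, contradicting Lemma \ref{mind}. For spanning: the ideal relation strictly lowers this grade by at least $2$ while crossing-resolution preserves it, which supplies the termination argument you were missing (the paper uses exactly this invariant, called $P$ there, in the proof of Theorem \ref{PBWR}).
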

\begin{proof}
Let $\overline{M}_{ij}$ be the image of $M_{ij}$ in $B/I$. Suppose there is a linear dependence between the specified monomials in the quotient, that is  $Q(\overline{M})=0$, where $Q(\overline{M})$ is a linear combination of monomials 
$\overline{M}_{i_1 j_1}^{n_1} \ldots \overline{M}_{i_k j_k}^{n_k}$.
Let $Q(M)$ be the same linear combination of the elements 
 ${M}_{i_1 j_1}^{n_1} \ldots {M}_{i_k j_k}^{n_k}$.
We get the following polynomial relation in the algebra $B$:
\begin{equation}
\label{relB}
Q(M)=M^2_{(N+1)} q(M),
\end{equation}
where $q(M)$ is a polynomial in variables $M_{ij}$. We can assume that monomials in the polynomial $q(M)$ have no crossings. Note that $q\ne 0$ by Lemma  \ref{mind}. 
Let us consider terms $Q_0, q_0$ in the polynomials $Q(M)$ and  $q(M)$ respectively which have maximal total degree in the pairs of variables 
$(x_N, p_N)$ and $(x_{N+1}, p_{N+1})$ (equivalently, the corresponding monomials have maximal possible number of indexes 
$N$ and $N+1$). Then relation \eqref{relB} implies the relation 
$Q_0 = M_{N, N+1}^2 q_0$.
 Monomials in $Q_0$ have no crossings and they contain 
$M_{N, N+1}$ in the power at most 1.
Monomials in 
$M_{N, N+1}^2 q_0$ have no crossings too and they contain  
$M_{N, N+1}^2$ in the power at least 2.
This is a contradiction with Lemma \ref{mind} which implies that monomials
 $M_{i_1 j_1}^{n_1} \ldots M_{i_k j_k}^{n_k}$ are linearly independent.
\end{proof}

\begin{lem}\label{keylem}
Let $x=(x_1,\ldots, x_{N+1}), p = (p_1,\ldots, p_{N+1})\in \C^{N+1}$ be generic vectors satisfying relation $M^2_{(N+1)} =0$.
Then there exist $\lambda, \mu \in \C$ such that
\begin{equation}
\label{anz}
\hat x = (\hat x_1, \ldots, \hat x_{N+1}) = \lambda x,
\qquad
\hat p =(\hat p_1, \ldots, \hat p_{N+1}) =  \lambda^{-1} p + \mu x
\end{equation}
satisfy
\begin{equation}
\label{xnewpnew}
\sum_{k=1}^{N+1} \hat x_k \hat p_k =0,
\qquad
\sum_{k=1}^{N+1} \hat p_k^2 =0.
\end{equation}
One also has $M_{ij} = \hat x_i \hat p_j - \hat x_j \hat p_i$ for any $i,j = 1, \ldots, N+1$.
\end{lem}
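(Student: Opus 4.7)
The plan is to show that the substitution (\ref{anz}) has enough free parameters to satisfy (\ref{xnewpnew}) whenever the constraint $M^2_{(N+1)}=0$ holds, while automatically preserving all angular momenta. I would begin by observing that since $\hat{x}_i\hat{p}_j - \hat{x}_j\hat{p}_i = \lambda x_i(\lambda^{-1}p_j+\mu x_j) - \lambda x_j(\lambda^{-1}p_i+\mu x_i)$, the cross-terms in $\mu$ cancel and the $\lambda$-factors compensate, so $M_{ij} = \hat{x}_i\hat{p}_j - \hat{x}_j\hat{p}_i$ holds for any choice of $\lambda,\mu$. This final claim of the lemma is therefore automatic and independent of the remaining argument.

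Next I would turn to (\ref{xnewpnew}). Write $A=\sum_k x_k^2$, $B=\sum_k x_k p_k$, $C=\sum_k p_k^2$, so that the assumption $M^2_{(N+1)}=0$ becomes $AC - B^2 = 0$. Genericity of $(x,p)$ ensures $A\ne 0$. A direct expansion gives
\begin{equation*}
\sum_{k=1}^{N+1}\hat{x}_k\hat{p}_k = B + \lambda\mu A, \qquad \sum_{k=1}^{N+1}\hat{p}_k^2 = \lambda^{-2}C + 2\lambda^{-1}\mu B + \mu^2 A.
\end{equation*}
The first equation in (\ref{xnewpnew}) forces the single condition $\lambda\mu = -B/A$, which I would impose by choosing any $\lambda\in\C^{\times}$ and defining $\mu = -B/(A\lambda)$.

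Substituting this value of $\mu$ into the second expression reduces it to $\lambda^{-2}(C - 2B^2/A + B^2/A) = \lambda^{-2}(AC-B^2)/A$, which vanishes precisely because $M^2_{(N+1)}=0$. Thus the second equation of (\ref{xnewpnew}) is satisfied automatically, with no further constraint on $\lambda$. Combined with the angular-momentum identity from the first paragraph, this completes the argument.

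There is no real obstacle: the main content is the fortunate cancellation $C - B^2/A = (AC-B^2)/A$ showing that the second condition in (\ref{xnewpnew}) follows from the first once the constraint $M^2_{(N+1)}=0$ is used. The only thing to be careful about is the genericity assumption, which is invoked solely to guarantee $A\ne 0$ so that $\mu$ can be defined.
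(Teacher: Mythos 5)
Your proof is correct and follows essentially the same route as the paper: expand the two conditions in $\lambda,\mu$, use the first to fix $\lambda\mu=-B/A$, and observe that the second then reduces to $(AC-B^2)/(A\lambda^2)=0$, which is exactly the hypothesis $M^2_{(N+1)}=0$. Your explicit verification of the $M_{ij}$ identity and the remark that genericity is only needed for $A\neq 0$ are welcome additions, but the substance is identical to the paper's argument.
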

\begin{proof}
By substituting expressions \eqref{anz} into \eqref{xnewpnew} we get conditions
\begin{align}
\label{cond1}
&\sum_{k=1}^{N+1}   x_k  p_k + \lambda \mu \sum_{k=1}^{N+1} x_k^2 =0,
\\
\label{cond2}
&\lambda^{-2}\sum_{k=1}^{N+1}  p_k^2 + 2 \lambda^{-1}\mu  \sum_{k=1}^{N+1}  x_k  p_k +  \mu^2 \sum_{k=1}^{N+1}  x_k^2 =0.
\end{align}
The condition \eqref{cond2} can then be replaced with the relation 
\begin{equation}
\label{cond3}
\lambda^2 \mu^2 \sum_{k=1}^{N+1}  x_k^2 = \sum_{k=1}^{N+1}  p_k^2.
\end{equation}
Note that relation  \eqref{cond3} follows from \eqref{cond1} since
$$
M_{(N+1)}^2=
\sum_{k=1}^{N+1}   x_k^2 \sum_{k=1}^{N+1}   p_k^2 - \left(\sum_{k=1}^{N+1}  x_k  p_k \right)^2 =
0.
$$
We get that
$$
\lambda \mu = - \frac{ \sum_{k=1}^{N+1} x_k p_k}{\sum_{k=1}^{N+1} x_k^2},
$$
which determines required $\lambda$ and $\mu$ up to a scaling factor.
\end{proof}

We are going to use previous lemmas in the study of baseses in the non-commutative algebras $R_{g, \gamma}(W)$, $\rho(R_{g, \gamma}(W))$. It is convenient to fix an ordering when writing elements of these algebras. It is clear that any element of the algebra $R_{g, \gamma}(W)$ can be represented as a linear combination of elements
\begin{equation}
\label{elb}
{\mathcal L}_{12}^{i_{12}}\ldots {\mathcal L}_{N-1,N}^{i_{N-1,N}}  {\mathcal A}_1^{k_1}\ldots {\mathcal A}_N^{k_N} \mathcal H^l w,
\end{equation}
where we write $\mL_{r s }$ to the left of $\mL_{r' s'}$ 
($1\le r<s\le N, 1\le r'< s'\le N$)   
if $r<r'$ or $r=r'$ and $s<s'$, and
all the powers in \eqref{elb} are natural numbers or zeroes. We will use the same ordering for the elements of algebra
$\rho(R_{g, \gamma}(W))$ in the next theorem.

\begin{thm}\label{PBWth} Consider elements of the algebra $D$ of the form
\begin{equation}
\label{basisin}
L_{12}^{i_{12}}\ldots L_{N-1, N}^{i_{N-1,N}} A_1^{k_1}\ldots A_N^{k_N} {\cal H}^l w
\end{equation}
with $i_{12}, \ldots, i_{N-1,N}, k_1, \ldots k_N, l \in  \N\cup \{0\}$,
$w \in W$ such that they have no crossings and   
$k_N\le 1$.
Then all such elements  are linearly independent.
\end{thm}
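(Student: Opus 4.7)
The plan is to pass to a classical leading symbol and reduce linear independence in $D$ to the combinatorial Lemma \ref{lemmom} on $(N+1)$-dimensional classical angular momenta.

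First, I equip $D$ with the filtration in which each Dunkl operator $\nabla_i$ has degree one, while coordinates $x_i$, the function $r^{-1}$, and Coxeter group elements sit in degree zero. Under this filtration $L_{ij}$ has degree one and both $A_i$, $\mathcal H_\gamma$ have degree two; all commutators among these generators as well as the nonlocal reflection terms inside $\nabla_i$, the Coulomb piece $\gamma/r$, and the $S$-corrections in the defining relations land in strictly lower filtration. The principal-symbol map is then a well-defined multiplicative map into the commutative ring $\C[x_1,\ldots,x_N,p_1,\ldots,p_N]\otimes\C W$ sending
$$
\sigma(L_{ij})=M_{ij}=x_ip_j-x_jp_i,\quad \sigma(A_i)=p_iT-x_iP^2,\quad \sigma(\mathcal H_\gamma)=P^2,\quad \sigma(w)=w,
$$
where $T=\sum_k x_kp_k$ and $P^2=\sum_k p_k^2$.

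Next, I adjoin formal auxiliary scalars $\hat x_{N+1},\hat p_{N+1}$ subject to $\hat p_{N+1}^2=-P^2$ and $\hat p_{N+1}\hat x_{N+1}=-T$, and set $M_{i,N+1}=x_i\hat p_{N+1}-\hat x_{N+1}p_i$. A direct check yields $\hat p_{N+1}M_{i,N+1}=\sigma(A_i)$, while the computation underlying Lemma \ref{keylem} gives $M^2_{(N+1)}=\hat R^2\hat P^2-\hat T^2=0$ in the extended ring. Consequently the leading symbol of a standard monomial \eqref{basisin} equals
$$
(-1)^l\,\hat p_{N+1}^{\,k_1+\cdots+k_N+2l}\,M_{12}^{i_{12}}\cdots M_{N-1,N}^{i_{N-1,N}}M_{1,N+1}^{k_1}\cdots M_{N,N+1}^{k_N}\,w,
$$
whose $M$-factor is exactly a no-crossings monomial in $(N+1)$-dimensional classical angular momenta with $M_{N,N+1}$ appearing to power at most one, that is, an element of the basis of $B_{N+1}/I_{N+1}$ supplied by Lemma \ref{lemmom}.

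Now suppose $\sum_\alpha c_\alpha P_\alpha=0$ in $D$. Restricting to the top filtration degree $d_{\max}$ and fixing $w\in W$, the leading symbols give a relation $\sum_\alpha c_\alpha(-1)^{l_\alpha}\hat p_{N+1}^{m_\alpha}M^{I_\alpha}=0$ in the extended ring. At fixed $d_{\max}$ each $M^{I_\alpha}$ determines all of $i_{rs}$ and $k_s$, and then $l_\alpha$ is determined by $d_{\max}=\sum i_{rs}+2\sum k_s+2l_\alpha$; hence the terms are in bijection with the no-crossings basis elements $M^{I_\alpha}$ times a nonzero extended-ring factor. Lemma \ref{lemmom} then forces $c_\alpha=0$ at the top filtration degree, and induction on $d_{\max}$ completes the proof. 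The main obstacle is to ensure that the algebraic linear independence of Lemma \ref{lemmom} in $B_{N+1}/I_{N+1}$ transfers to linear independence of the images in the enlarged commutative ring with the relations $\hat p_{N+1}^2+P^2=0$, $\hat p_{N+1}\hat x_{N+1}+T=0$; this can be resolved either via Lemma \ref{keylem}, which parametrizes generic points of the quadric $M^2_{(N+1)}=0$ by generic $(x,p)\in\C^{2N}$, or by an $(x,p)$-bidegree analysis in $\C[x,p]\otimes\C W$ showing that each no-crossings basis monomial has a uniquely determined bidegree that isolates its coefficient.
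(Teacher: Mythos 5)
Your strategy is the same as the paper's: pass to classical principal symbols, realise the symbols of $A_i$ and ${\mathcal H}_\gamma$ through auxiliary null variables $\hat x_{N+1},\hat p_{N+1}$ so that they become $(N+1)$-dimensional angular momenta, and reduce to Lemma \ref{lemmom}. However, your concluding inference has a genuine gap. At fixed top filtration degree $d_{\max}$ the surviving terms are $c_\alpha(-1)^{l_\alpha}\hat p_{N+1}^{m_\alpha}M^{I_\alpha}$ with $m_\alpha=k_1+\dots+k_N+2l_\alpha$ \emph{varying} from term to term: for $N=2$, for instance, the admissible monomials $L_{12}^{2}$, $A_1$, $A_2$, ${\mathcal H}$ all have filtration degree $2$ and symbols $M_{12}^{2}$, $\hat p_{3}M_{13}$, $\hat p_{3}M_{23}$, $-\hat p_{3}^{2}$. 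Linear independence of the $M^{I_\alpha}$ in $B_{N+1}/I_{N+1}$ does not imply linear independence of the products $\hat p_{N+1}^{m_\alpha}M^{I_\alpha}$ when the cofactors differ (in $\C[t]$ the set $\{1,t\}$ is independent while $\{t\cdot 1,\,1\cdot t\}$ is not), so the sentence ``Lemma \ref{lemmom} then forces $c_\alpha=0$'' does not follow as written. The paper avoids this by first splitting the top-symbol relation into homogeneous components for the extra grading $\deg p_i=1$, $\deg x_i=-1$, under which $M_{ij}$ has degree $0$ and $\hat p_{N+1}^{m_\alpha}$ has degree $m_\alpha$; within one component all $m_\alpha$ coincide and the common factor can be divided out, leaving a genuine polynomial identity $Q(M)=0$ in the $M_{ij}$, $1\le i<j\le N+1$. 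Some such separation is indispensable in your argument too; it is available because your two relations for $\hat x_{N+1},\hat p_{N+1}$ are homogeneous for this grading.

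The second issue, which you flag as ``the main obstacle'' but do not resolve, is where the real work lies: one must show that $Q(M)=0$ in your extended ring (equivalently, on the locus parametrised by generic $(x,p)\in\C^{2N}$) forces $Q\in I_{N+1}$. Your ring $\C[x,p,\hat x_{N+1},\hat p_{N+1}]/(\hat p_{N+1}^{2}+P^{2},\ \hat x_{N+1}\hat p_{N+1}+T)$ is not obviously reduced or irreducible (note that $\hat x_{N+1}^{2}$ is not expressible through $x,p$, and the locus $\hat p_{N+1}=P^{2}=T=0$ with $\hat x_{N+1}$ free satisfies both relations), so the kernel of $B_{N+1}\to$ (extended ring) is not obviously equal to $I_{N+1}$ and independence does not ``transfer'' for free. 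The paper's route is: Lemma \ref{keylem} shows the parametrised points are dense in the quadric $M^{2}_{(N+1)}=0$; Hilbert's Nullstellensatz then gives $Q^{k}\in I_{N+1}$; integrality of the scheme $M^{2}_{(N+1)}=0$ gives $k=1$; only then does Lemma \ref{lemmom} apply. Both gaps are repairable, but filling them in essentially reproduces the paper's proof.
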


\begin{proof} Let us
assume a linear dependence of the elements of the form \eqref{basisin}. Then there is a linear dependence between their  highest symbols. To get the highest symbols one replaces $L_{ij}$   with the classical angular momentum $M_{ij} =x_i p_j - x_j p_i$, one replaces $A_i$   with
$$
A_i^{cl}= - \sum_{j=1}^N p_j(x_i p_j - x_j p_i),
$$
 and one replaces 
${\cal H}_\gamma$ with 
$H^{cl}= \sum _{i=1}^N p_i^2$. 
 It is sufficient to consider the case $w=1$. We can assume that the highest symbol is homogeneous if $\deg p_i=1$ and  $\deg x_i = -1$ 
for all $i$. 
Therefore, after dividing by a suitable power of $(- \sum p_i^2)^{1/2}$ we get a polynomial $Q$ in 
$A_i^{cl}/(-H^{cl})^{1/2}$, $M_{ij}$ ($1\le i<j \le N$) 
which is equal to zero.

Let us introduce new variables
\begin{equation}
\label{newxpnp1}
p_{N+1}=\left(-\sum_{j=1}^N p_j^2\right)^{1/2}, \qquad x_{N+1} = -\frac{\sum_{j=1}^N x_j p_j}{p_{N+1}}.
\end{equation}
Notice that
$$
M_{i, N+1} = x_i p_{N+1} - x_{N+1} p_i = \frac{p_i \sum_{j=1}^N x_j p_j - x_i\sum_{j=1}^{N} p_j^2}{\big(-\sum_{j=1}^N p_j^2\big)^{1/2}}
= \frac{\sum_{j=1}^N p_j (x_j p_i - x_i p_j)}{\big(-\sum_{j=1}^N p_j^2\big)^{1/2}}
 =
\frac{A_i^{cl}}{(-H^{cl})^{1/2}}.
$$
 We get that $Q(M) =0$, where the polynomial $Q(M)$ is a polynomial in variables $M_{ij}$ ($1\le i<j\le N+1$) obtained by replacing variables  
$A_i^{cl}/(-H^{cl})^{1/2}$
 with 
$M_{i, N+1}$ ($1\le i \le N$).
Note that $(x_1, \ldots, x_{N+1})$, $(p_1,\ldots, p_{N+1})$ satisfy the condition $M^2_{(N+1)} =0$. Furthermore, it follows by Lemma \ref{keylem} that for any  generic point on $M^2_{(N+1)}=0$ we can assume that $x_{N+1}, p_{N+1}$ are given by formulas \eqref{newxpnp1} without change of the angular momenta $M_{ij}$, $1\le i<j \le N+1$.
Therefore we get that $Q(M)=0$, where angular momenta satisfy $M^2_{(N+1)}=0$, and there are no other constraints on angular momenta $M_{ij}$. 
By Hilbert's Nullstellensatz we get that $Q(M)^k\in I_{N+1}$ for some $k \in \N$.
Since the scheme $M^2_{(N+1)}=0$ is integral (see e.g.  \cite{FT}), it follows
that $k=1$ so that $Q(\overline{M}) =0$ in the quotient algebra $B_{N+1}/I_{N+1}$, where $Q(\overline{M})$ stands for replacing each variable  $M_{ij}$ in the polynomial $Q$ with its image $\overline{M_{ij}}$ in the quotient.
By Lemma \ref{lemmom} we get that $Q$ is the zero polynomial.
\end{proof}


Theorem \ref{PBWth} allows us to give a linear basis for the algebra $R_{g,\gamma}(W)$.

\begin{thm}\label{PBWR}
Consider elements of the algebra $R_{g,\gamma}(W)$ of the form
\begin{equation}
\label{elR}
{\mathcal L}_{12}^{i_{12}}\ldots {\mathcal L}_{N-1,N}^{i_{N-1,N}}  {\mathcal A}_1^{k_1}\ldots {\mathcal A}_N^{k_N} \mathcal H^l w,
\end{equation}
where $w\in W$, and $i_{12},\ldots, i_{N-1,N}, k_1, \ldots, k_N, l \in \N\cup\{0\}$ are such that 
$k_N\le 1$ and there is   no crossings. These elements form a basis in  the algebra $R_{g,\gamma}(W)$.
\end{thm}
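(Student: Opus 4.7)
The plan is to establish linear independence and spanning separately; linear independence will be an immediate consequence of Theorem \ref{PBWth}, while spanning requires a careful reduction procedure using the defining relations \eqref{rep1}--\eqref{rep8}.

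For linear independence, I would apply the homomorphism $\rho\colon R_{g,\gamma}(W)\to D$ constructed in Proposition \ref{rhohom}. Under $\rho$, an element \eqref{elR} of $R_{g,\gamma}(W)$ maps to the corresponding element \eqref{basisin} of $D$, and Theorem \ref{PBWth} asserts that these image elements are linearly independent. Any linear dependence among the elements \eqref{elR} in $R_{g,\gamma}(W)$ would therefore push forward to one in $D$, a contradiction.

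For spanning, I would show that the defining relations allow any element of $R_{g,\gamma}(W)$ to be rewritten in the claimed ordered form \eqref{elR}. First, relation \eqref{rep2} moves every $w\in W$ to the right, and relation \eqref{rep3extra} (together with $W$-centrality of $\mathcal H$, which can be read off from $W$-equivariance of the scalar relation \eqref{rep3}) lets all factors of $\mathcal H$ be collected just before the $W$-factor. Relation \eqref{rep5} then brings all $\mathcal L_{ij}$ generators to the left of all $\mathcal A_k$'s, relation \eqref{rep6} orders the $\mathcal L_{ij}$'s among themselves, and relation \eqref{rep4} orders the $\mathcal A_k$'s; each such reordering replaces a length-two commutator by a term of strictly smaller total degree in $\{\mathcal L, \mathcal A\}$. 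The crossing relations \eqref{rep7} and \eqref{rep8} next eliminate crossed $\mathcal L\mathcal L$- and $\mathcal L\mathcal A$-pairs in favour of non-crossing ones plus lower-degree corrections. Finally, solving \eqref{rep3} for $\mathcal A_N^2$,
\[
\mathcal A_N^2 = -\sum_{i=1}^{N-1}\mathcal A_i^2 + \mathcal H\left(\sum_{i<j}^N \mathcal L_{ij}^2 - S(S-N+1) - \sfrac{(N-1)^2}{4}\right) + \gamma^2,
\]
allows any exponent $k_N\ge 2$ to be reduced in favour of monomials with smaller $k_N$ and either equal or strictly smaller $\mathcal A$-degree.

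The main obstacle I anticipate is organising the induction so that this rewriting really terminates, since applying the $\mathcal A_N^2$-elimination to a no-crossing ordered monomial produces terms that must be re-ordered and may have new crossings that need to be removed. The induction I would use is on the lexicographically ordered pair $(d_{\mathcal A}, k_N)$, where $d_{\mathcal A}$ is the total $\mathcal A$-degree; within fixed $d_{\mathcal A}$ one further inducts on total $\mathcal L$-degree and on the number of crossings. Each reordering relation strictly lowers one of these quantities on its correction term, and the $\mathcal A_N^2$-elimination either lowers $d_{\mathcal A}$ or keeps it fixed while lowering $k_N$. With this well-founded ordering the reduction terminates and produces a linear combination of elements of the form \eqref{elR}, which combined with linear independence proves the theorem.
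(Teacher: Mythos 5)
Your overall route is the same as the paper's: linear independence is pulled back through $\rho$ from Theorem \ref{PBWth}, and spanning is proved by rewriting into no-crossing ordered form and then eliminating $\A_N^2$ via the relation \eqref{relA}. The gap is in your termination measure. The lexicographic pair $(d_{\A},k_N)$ (however you interleave it with $\mL$-degree and crossing number) is not monotone under the rewriting rules, because the two kinds of steps pull in opposite directions. On one hand, eliminating $\A_N^2$ can create new crossings at fixed $d_\A$ and fixed $\mL$-degree (e.g.\ $\mL_{13}\A_3^2\mapsto -\mL_{13}\A_2^2+\cdots$, where $\mL_{13}\A_2^2$ is crossed). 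On the other hand, resolving a crossing via \eqref{rep8} in the form $\mL_{iN}\A_j=\mL_{ij}\A_N+\mL_{jN}\A_i$ ($i<j<N$) is not a "lower-order correction'': it is an exact replacement that \emph{raises} $k_N$ by one at fixed $d_\A$ and fixed $\mL$-degree. These two effects can compound: for $N=4$ the no-crossing monomial $\mL_{14}^2\A_4^2$ has $(d_\A,k_4)=(2,2)$, and after applying \eqref{relA} and resolving the resulting crossings one of the output monomials is $\mL_{12}^2\A_4^2$, again with $(d_\A,k_4)=(2,2)$. So your measure records no progress after a full reduction round, and your induction does not close.

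The paper fixes exactly this point with a different counter: for a monomial in its arc picture, let $P$ be the number of arc-endpoints lying at the points $N$ or $N+1$, with arcs joining $N$ to $N+1$ (i.e.\ $\A_N$-factors) counted twice; equivalently $P=\#\{\mL_{\cdot N}\text{-factors}\}+d_\A+k_N$. One checks that $P$ is invariant under the reordering and crossing relations \eqref{rep4}--\eqref{rep8} (in the example above, $\mL_{iN}\A_j$ and $\mL_{ij}\A_N$ both contribute $2$) and does not increase on lower-order corrections, while each application of \eqref{relA} strictly decreases the maximum of $P$ by at least $2$; since $P\ge 2k_N$, the process must reach $k_N\le 1$. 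In the example, $P$ drops from $6$ for $\mL_{14}^2\A_4^2$ to $4$ for $\mL_{12}^2\A_4^2$, which is the progress your measure misses. If you replace your lexicographic induction by this single quantity (or prove an equivalent monotonicity statement), the rest of your argument goes through and coincides with the paper's proof.
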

\begin{proof}
Firstly we note that relations of the algebra $R_{g, \gamma}(W)$ imply that any element of the algebra can be represented as a linear combination of the elements of the form \eqref{elR} which have no crossings and where $k_N\in\N\cup\{0\}$ is arbitrary.
 Suppose now that $k_N\ge 2$. Let us apply the relation \eqref{rep3} in the form
\begin{equation}
\label{relA}
\A_{N}^2 = - \sum_{i=1}^{N-1} \A_i^2 + \mathcal H \left( \sum_{i<j}^N \mathcal L_{ij}^2
 - S(S-N+1)- \sfrac{(N-1)^2}{4}\right) + \gamma^2
\end{equation}
to the monomial \eqref{elR}, and let us rearrange the resulting terms as a linear combination of monomials with no crossings. For a given monomial and its geometrical realisation let $P$ be the total number of arcs ending in the points $N$ or $N+1$ with arcs connecting these points counted twice. Then as we apply relation \eqref{relA} the maximum of $P$ across all the resulting monomials is reduced by at least 2 in comparison with the original monomial. By continuing the process we arrive at monomials with $k_N\le 1$. Therefore elements \eqref{elR} with no crossings and 
$k_N\le 1$ span the algebra $R_{g, \gamma}(W)$. It follows from Proposition \ref{rhohom} and Theorem \ref{PBWth} that these elements are linearly independent.
\end{proof}

Theorem \ref{PBWR} can be interpreted as Poincar\'e--Birkhoff--Witt theorem for the algebra $R_{g, \gamma}(W)$.  Theorems \ref{PBWth}, \ref{PBWR}  also have the following implication.

\begin{cor}\label{corind}
Homomorphism  of algebras $\rho\colon R_{g, \gamma}(W) \to D$ given by formulas  \eqref{rho} has trivial kernel.
\end{cor}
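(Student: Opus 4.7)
The plan is to deduce this immediately from the two preceding theorems. First I would take an arbitrary element $x \in R_{g,\gamma}(W)$ lying in $\ker \rho$. By the spanning part of the argument used in the proof of Theorem \ref{PBWR} (which relies only on the defining relations \eqref{rep1}--\eqref{rep8}, in particular using \eqref{rep3} in the form \eqref{relA} to bring $k_N$ down to at most $1$), we can express
\begin{equation*}
x = \sum c_{i,k,l,w}\, {\mathcal L}_{12}^{i_{12}}\ldots {\mathcal L}_{N-1,N}^{i_{N-1,N}}  {\mathcal A}_1^{k_1}\ldots {\mathcal A}_N^{k_N} \mathcal H^l w
\end{equation*}
as a finite linear combination of the ordered monomials from \eqref{elR}, with no crossings and $k_N \le 1$.

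Next I would apply $\rho$ term by term. Using the definition \eqref{rho}, each such monomial maps to the corresponding ordered product
\begin{equation*}
L_{12}^{i_{12}}\ldots L_{N-1,N}^{i_{N-1,N}} A_1^{k_1}\ldots A_N^{k_N} {\mathcal H}_\gamma^l w \in D,
\end{equation*}
which is precisely an element of the form \eqref{basisin} with the same multi-indices, the same restriction $k_N \le 1$, and the same no-crossings property. Hence $\rho(x)=0$ yields a linear relation among exactly the elements of $D$ covered by Theorem \ref{PBWth}.

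Finally, Theorem \ref{PBWth} asserts linear independence of all such elements in $D$, so every coefficient $c_{i,k,l,w}$ must vanish, which gives $x=0$. There is no obstacle here beyond recognising that the spanning half of the proof of Theorem \ref{PBWR} is a purely algebraic consequence of the defining relations of $R_{g,\gamma}(W)$ (so it applies to any element, in particular to any element of $\ker \rho$), and that $\rho$ sends ordered basis monomials to ordered monomials of the same combinatorial shape, so that Theorem \ref{PBWth} applies directly.
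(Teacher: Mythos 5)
Your argument is correct and is exactly the one the paper intends: the corollary is stated as an immediate consequence of Theorems \ref{PBWth} and \ref{PBWR}, namely that the spanning set of no-crossings monomials \eqref{elR} with $k_N\le 1$ is mapped by $\rho$ to the linearly independent family \eqref{basisin}. Your observation that only the spanning half of Theorem \ref{PBWR} (a purely relation-theoretic fact) plus Theorem \ref{PBWth} is needed is precisely how the paper's logic is structured, so there is nothing to add.
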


Consider the Dunkl angular momenta algebra $H_{g}^{so(N+1)}(W)$, where group $W$ acts by its geometric representation on $V\cong \C^N$. Vector space
$V$ is embedded into $\C^{N+1}$ via the first $N$ components:
$$
V\ni (x_1,\ldots, x_N)\to (x_1, \ldots, x_N, 0)\in \C^{N+1},
$$
 and the group $W$ acts trivially on the last component of vectors in $\C^{N+1}$.
It appears that a central quotient of the algebra $R_{g,\gamma}(W)$ is isomorphic to a central quotient of the corresponding  angular momenta   algebra $H_{g}^{so(N+1)}(W)$.

\begin{thm}
For any $a\in \C^\times$ there is an isomorphism of algebras
$$
\varphi: \quad
R_{g,\gamma}(W)\big/({\mathcal H}-a) \to H_{g}^{so(N+1)}(W)\big/ ({\cal I}_{N+1} - b),
$$
where $b = \frac{(N-1)^2}{4} - \frac{\gamma^2}{a}$. 
Under this isomorphism
$$
\varphi({\cal L}_{ij}) = L_{i j}, \quad
\varphi({\cal A}_{i}) = (-a)^{-1/2} L_{i, N+1}, \quad
\varphi(w)=w,
$$
where $1\le i, j \le N$, $w\in W$.
\end{thm}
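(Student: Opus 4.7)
My plan is to define $\varphi$ on generators, verify it respects the defining relations \eqref{rep1}--\eqref{rep8} modulo $\mathcal H-a$, and then establish bijectivity using the PBW basis of Theorem~\ref{PBWR}. The cornerstone observation for well-definedness is that since $W$ fixes the last coordinate of $\C^{N+1}$, every root $\alpha\in\mathcal R$ has $\alpha_{N+1}=0$, so in the $S$-matrix associated with the $(N+1)$-dimensional setting one has $S_{i,N+1}=0$ for $i\le N$ and $S_{N+1,N+1}=1$. This makes the $so(N+1)$-like commutation relations \eqref{comLL}, crossing relations \eqref{cros} and $W$-equivariance \eqref{rel3} of $H_g^{so(N+1)}(W)$ collapse into the relations \eqref{rep1}, \eqref{rep2}, \eqref{rep6}--\eqref{rep8} of $R_{g,\gamma}(W)$ once $L_{i,N+1}$ is identified with a scalar multiple $c^{-1}\mathcal A_i$. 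The centrality \eqref{rep3extra} is automatic as $\mathcal H$ maps to the scalar $a$, and matching \eqref{rep4} reduces to the direct consequence $[L_{i,N+1},L_{j,N+1}]=-L_{ij}$ of \eqref{comLL}, which fixes the scaling via $c^{2}=-a$.

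The computationally subtle step is the quadratic relation \eqref{rep3}. Using $\sum_{i=1}^N L_{i,N+1}^2 = L^2_{(N+1)}-L^2_{(N)}$ together with the quotient condition $\mathcal I_{N+1}=L^2_{(N+1)}-S(S-N+1)=b$, the relation \eqref{rep3} reduces to the scalar identity $-ab=-a\tfrac{(N-1)^2}{4}+\gamma^2$, which is precisely the defining equation for $b=\tfrac{(N-1)^2}{4}-\tfrac{\gamma^2}{a}$. This matching is the main technical point of the proof, as it requires careful bookkeeping of the distinct constants $S(S-N+2)$ and $S(S-N+1)$ appearing in the Casimirs of $H_g^{so(N)}(W)$ and $H_g^{so(N+1)}(W)$, and it is what pins down the specific value of $b$.

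Surjectivity of $\varphi$ is immediate since $L_{ij}$ for $1\le i<j\le N$, $L_{i,N+1}$ (proportional to $\varphi(\mathcal A_i)$), and $w\in W$ together generate $H_g^{so(N+1)}(W)/(\mathcal I_{N+1}-b)$. For injectivity I apply Theorem~\ref{PBWR}: the quotient $R_{g,\gamma}(W)/(\mathcal H-a)$ is spanned by the no-crossings monomials $\mathcal L_{12}^{i_{12}}\cdots\mathcal L_{N-1,N}^{i_{N-1,N}}\mathcal A_1^{k_1}\cdots\mathcal A_N^{k_N} w$ with $k_N\le 1$, and $\varphi$ sends them to nonzero scalar multiples of $L_{12}^{i_{12}}\cdots L_{N-1,N}^{i_{N-1,N}} L_{1,N+1}^{k_1}\cdots L_{N,N+1}^{k_N} w$, which is the natural no-crossings family in $H_g^{so(N+1)}(W)/(\mathcal I_{N+1}-b)$ with the power of $L_{N,N+1}$ at most one. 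Linear independence of these images follows by repeating the highest-symbol argument of Theorem~\ref{PBWth} in dimension $N+1$: a hypothetical dependence would lift to one among classical no-crossings monomials in $M_{ij}$, $1\le i<j\le N+1$, modulo only $M^2_{(N+1)}=0$ (the leading symbol of $\mathcal I_{N+1}-b$, whose zero scheme is integral by \cite{FT}), and this is ruled out by Lemma~\ref{lemmom}. Hence $\varphi$ is an isomorphism, and once the well-definedness computation is completed, the injectivity step is essentially a replay of Section~\ref{section-BR} in one higher dimension.
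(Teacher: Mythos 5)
Your proof is correct, and its core --- checking the defining relations \eqref{rep1}--\eqref{rep8} against \eqref{comLL}, \eqref{cros}, \eqref{rel3} by using that $W$ fixes the last coordinate, so $S_{i,N+1}=S_{N+1,i}=0$ for $i\le N$ and $S_{N+1,N+1}=1$ --- is exactly the paper's argument. Two points of comparison are worth recording. First, your bookkeeping of the scaling is more careful than the statement itself: requiring $[\varphi(\mathcal A_i),\varphi(\mathcal A_j)]=a\,\varphi(\mathcal L_{ij})$ against $[L_{i,N+1},L_{j,N+1}]=-L_{ij}$ forces $c^2=-a$, i.e.\ $\varphi(\mathcal A_i)=(-a)^{1/2}L_{i,N+1}$; the exponent $(-a)^{-1/2}$ printed in the theorem would only close relation \eqref{rep4} when $a^2=1$, and with it relation \eqref{rep3} would not reduce to a condition on $\mathcal I_{N+1}$ alone (the $L^2_{(N)}$ terms fail to cancel). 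So your computation in effect corrects a typo, and your derivation of $-ab=-a\frac{(N-1)^2}{4}+\gamma^2$ is the right way to pin down $b$. Second, you handle bijectivity by surjectivity-on-generators plus an injectivity argument replaying the PBW machinery (Theorem \ref{PBWth}, Lemma \ref{lemmom}) in dimension $N+1$; the paper instead gets the inverse for free by observing that \eqref{comLL}, \eqref{cros}, \eqref{rel3} are the \emph{defining} relations of $H_g^{so(N+1)}(W)$ (by \cite{fh}), so the two central quotients have literally matching presentations and the correspondence of relations is already a two-sided isomorphism. Your route costs more --- in particular it needs the basis of $H_{g}^{so(N+1)}(W)/(\mathcal I_{N+1}-b)$, for which you should make explicit that the associated graded of the ideal $(\mathcal I_{N+1}-b)$ is generated by $M^2_{(N+1)}$ before invoking Lemma \ref{lemmom} --- but it buys an independent verification of the PBW property of the target quotient, which the paper only asserts afterwards in the $\gamma=0$ discussion.
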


\begin{proof}
Note that elements 
$$
S_{ij}   =  \delta_{ij} + \sum_{\alpha\in {\mathcal R}_+} \frac{2 g_\alpha \alpha_i\alpha_j}{(\alpha, \alpha)} s_\alpha
  \in \C W \subset H_g^{so(N+1)}(W),
$$
$1\le i, j \le N+1$, 
 satsify the properties
$$
S_{N+1, N+1}=1, \quad S_{i, N+1}=  S_{N+1,i} = 0  \quad  (1\le i \le N).
$$
Therefore relations \eqref{rep4}, \eqref{rep5}, \eqref{rep6} correspond under the map $\rho$ to all the cases of the  relation \eqref{comLL}  (with $1\le i,j,k,l\le N+1$). Similarly, relations \eqref{rep7}, \eqref{rep8} correspond to all the cases of the relation \eqref{cros}.
Note also that relation \eqref{rep3} at ${\cal H}=a$ corresponds under $\rho$ to the condition 
${\cal I}_{N+1} = \frac{(N-1)^2}{4} -\frac{\gamma^2}{a}$ (see \eqref{Lsq}), and that all the remaining relations match as well.

\end{proof}

Note that for $\gamma=0$ the algebra $R_{g,0}(W)$ is isomorphic to the subalgebra   $\widehat R_g(W)$    of the rational Cherednik algebra $H_g(W)$ given as $\widehat R_g(W) = \rho(R_{g,0}(W))$.
Furthermore, the central quotient $R_{g,0}(W)/({\mathcal H}-a)$ does not depend on non-zero value of $a$ and it is a flat deformation in $g$ of non-homogeneous quadratic Poincar\'e--Birkhoff--Witt algebras in the sense of \cite{BG}. Indeed, let us consider the central quotient 
$$
\varphi(R_{g,0}(W)/({\mathcal H}-a)) = H_{g}^{so(N+1)}(W)\big/ ({\cal I}_{N+1} - \sfrac{(N-1)^2}{4})
$$  
 instead. It follows from Lemma \ref{lemmom} that for any $g$ the linear basis in this  quotient is given by the images of monomials
$$
L_{12}^{i_{12}}\ldots L_{N, N+1}^{i_{N, N+1}} w,
$$
where $w \in W$, $i_{rs}\in \N\cup\{0\}$, 
$i_{N, N+1}\le 1$, and monomials have no crossings. It is also clear that all the defining relations of the algebra $R_{g,0}(W)/({\mathcal H}-a)$ are quadratic.

\section{Maximal superintegrability}
\label{superint}

In this section we discuss application of Dunkl LRL vector to questions on maximal superintegrability. 
Let us recall that a quantum Hamiltonian $H$ in $N$-dimensional space is maximally superintegrable if there exists $2N-1$ differential operators $J_1=H, J_2, \ldots, J_{2N-1}$ such that $J_i$ are algebraically  independent and $[H, J_k]=0$ for any $k, 1\le k \le 2N-1$ (cf. \cite{woj83}).

Let  $H_{\gamma} = H_{\gamma, g}^{loc}$ be given by
$$
H_{\gamma}=
\Delta
-  \sum_{\alpha\in {\mathcal R}_+}\frac{g_\alpha(g_\alpha-1) (\alpha, \alpha)}{x_\alpha^2} +  \frac{2\gamma}{r}.
$$
Note that when acting on $W$-invariants the differential operator
$
H_{\gamma}= \text{Res} {\mathcal H}_{\gamma}
$,
where $\text{Res}$ denotes restriction of a $W$-invariant operator to $W$-invariants (cf. \cite{Heckman}). 

Let us consider the following algebra
$$
R ^{cl}_N= \langle A_i^{cl}, M_{jk}, p^2\colon 1\le i \le N, 1\le j<k \le N \rangle  \subset \C[x_1, \ldots, x_N, p_1, \ldots, p_N],
$$
where $A_i^{cl}= \sum_{j=1}^N p_j(x_j p_i - x_i p_j)$,   $M_{jk} = x_j p_k - x_k p_j$, $p^2 = \sum_{i=1}^N p_i^2$.

\begin{prop}\label{trdeg}
Transcendence degree of the agebra $R^{cl}_N$ is equal to  $2N-1$.
\end{prop}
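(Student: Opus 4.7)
The plan is to bound the transcendence degree from above by $2N-1$ by exhibiting a one-parameter symmetry of all the generators, and then to bound it from below by reconstructing $(x,p)$ from the generator values up to this symmetry.

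Consider the flow $\Phi_s(x,p) = (x+sp, p)$, $s \in \C$, on $\C^{2N}$. First I would check that every generator of $R^{cl}_N$ is invariant under $\Phi_s$. For $p^2$ and $M_{jk} = x_j p_k - x_k p_j$ this is immediate, and for $A_i^{cl} = p_i(x \cdot p) - x_i p^2$ (the classical analogue of the form of the LRL components given in Proposition \ref{compAprop} at $\gamma = 0$) one reads off $A_i^{cl}(x+sp, p) - A_i^{cl}(x,p) = sp_i p^2 - sp_i p^2 = 0$. Consequently the evaluation map $\phi\colon \C^{2N} \to \C^m$ defined by the generators factors through the quotient by the (generically free) $\Phi$-action, whose dimension is $2N-1$. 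Hence the transcendence degree of $R^{cl}_N$ is at most $2N-1$.

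For the opposite bound, I would show that the generators separate generic $\Phi$-orbits. Rewriting $A_i^{cl} = -p^2 x_i^\perp$, where $x^\perp = x - \frac{x \cdot p}{p^2}\, p$ is the component of $x$ orthogonal to $p$, one recovers the vector $x^\perp$ from the values of $p^2$ and $A_1^{cl}, \ldots, A_N^{cl}$ on the open locus $\{p^2 \ne 0\}$. Next, $M_{ij} = x_i^\perp p_j - x_j^\perp p_i$ encodes the bivector $x^\perp \wedge p$, and together with the already known $x^\perp$ this determines $p$ modulo the line $\C x^\perp$; the scalar $p^2$ then pins $p$ down up to a sign on its component along $x^\perp$. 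Finally $x = x^\perp + \lambda p$ with $\lambda = (x\cdot p)/p^2$, and $\lambda$ is exactly the parameter along the $\Phi$-orbit, invisible to all generators by construction. Thus the induced map $\C^{2N}/\Phi \to \C^m$ is generically finite (in fact generically two-to-one), so its image has dimension $2N-1$, yielding the reverse inequality.

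The main delicate point is the reconstruction of $p$: on the Zariski-dense open subset where $p \ne 0$, $x^\perp \ne 0$, and $x^\perp, p$ are linearly independent, the $2$-plane $\mathrm{span}(x^\perp, p)$ is recovered from the bivector $x^\perp \wedge p$, the component of $p$ perpendicular to $x^\perp$ within this plane is read off from the bivector, and the parallel component is then fixed up to sign by the value of $p^2$. Generic finiteness of the orbit map is all that is required for the transcendence-degree conclusion, so no serious obstacle arises beyond this bookkeeping.
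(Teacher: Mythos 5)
Your proof is correct, and it takes a genuinely different route from the paper's. The paper argues combinatorially: it exhibits the explicit $2N-1$ elements $p^2$, $M_{1i}$ ($2\le i\le N$), $M_{i,i+1}$ ($2\le i\le N-1$) and $A_1^{cl}$, notes that monomials in them have no crossings, deduces their algebraic independence from Theorem \ref{PBWth} (which rests on the linear independence of non-crossing monomials in classical angular momenta, Lemmas \ref{mind} and \ref{lemmom}), and then uses the crossing relations to express every remaining generator as a rational function of these. You instead compute the dimension of the image of the evaluation morphism $(x,p)\mapsto (A^{cl}, M, p^2)$ directly: all generators are invariant under the generically free flow $\Phi_s(x,p)=(x+sp,p)$, so the image has dimension at most $2N-1$, and your reconstruction of $x^\perp=-A^{cl}/p^2$ and of $p$ from the bivector $M=x^\perp\wedge p$ together with $p^2$ shows the generic fiber is a finite union of one-dimensional orbits, giving equality. (In fact, since $x^\perp\cdot p=0$ automatically, $p$ is recovered uniquely on the locus where $(x^\perp,x^\perp)\ne 0$, so the orbit map is generically injective rather than two-to-one; this only strengthens your finiteness claim.) Your argument is more self-contained — it bypasses the entire no-crossings machinery — while the paper's proof has the side benefit of producing an explicit transcendence basis; for the application in Theorem \ref{maxint} only the transcendence degree is used, so either proof serves equally well.
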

\begin{proof}
Consider elements $p^2$, $M_{1 i}$, $2 \le i \le N$, $M_{i, i+1}$, $2\le i \le N-1$ and $A_1^{cl}$. Note that any monomial in these elements has no crossings. It follows from  Theorem \ref{PBWth} and its proof that these elements
 are algebraically independent. It is easy to see using crossing relations that any other element $M_{ij}$, $A_i^{cl}$ can be expressed as a rational function of above elements. The statement follows. 
\end{proof}

\begin{thm}\label{maxint}
Hamiltonian $H_\gamma$ is maximally superintegrable. 
\end{thm}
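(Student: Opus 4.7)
The plan is to construct $2N-1$ algebraically independent commuting integrals by lifting $W$-invariant elements of the commutative algebra $R^{cl}_N$ to the noncommutative algebra and then restricting to $W$-invariants. By Remark \ref{remark1}, any $W$-invariant polynomial $P$ in the noncommuting variables $A_i, L_{jk}, \mathcal H_\gamma$ (and elements of $W$) satisfies $[H_\gamma, \mathrm{Res}(P)] = 0$, so it suffices to produce $2N-1$ such $P_k$ whose restrictions are algebraically independent, and the algebraic independence will be checked at the level of highest symbols in $R^{cl}_N$.

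First I would upgrade Proposition \ref{trdeg} from $R^{cl}_N$ to its subalgebra of $W$-invariants. The group $W$ acts on $R^{cl}_N$ by algebra automorphisms, inherited from its linear action on $x_i, p_i$, which sends each type of generator $A_i^{cl}, M_{jk}, p^2$ into linear combinations of generators of the same type. Since $W$ is finite and $R^{cl}_N$ is an integral domain, $\mathrm{Frac}(R^{cl}_N)$ is a finite Galois extension of $\mathrm{Frac}((R^{cl}_N)^W)$ with Galois group $W$, so these two fraction fields share the transcendence degree $2N-1$. Hence there exist algebraically independent $f_1, \ldots, f_{2N-1} \in (R^{cl}_N)^W$, and I would arrange $f_1 = p^2$.

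Next, for each $f_k$ I would choose a noncommutative monomial expression $\widetilde P_k$ in $A_i, L_{jk}, \mathcal H_\gamma$ whose highest symbol under the filtration $\deg p_i = 1$, $\deg x_i = -1$ (so that $L_{jk}, A_i, \mathcal H_\gamma$ carry degrees $0, 1, 2$) equals $f_k$, and then $W$-symmetrize:
\[
P_k \;=\; |W|^{-1} \sum_{w \in W} w\, \widetilde P_k\, w^{-1}.
\]
Because the symbol map is $W$-equivariant and $f_k$ is $W$-invariant, the highest symbol of $P_k$ remains $f_k$, and $P_k$ commutes with every element of $W$. Thus $J_k := \mathrm{Res}(P_k)$ is well defined and, by Remark \ref{remark1}, commutes with $H_\gamma$; after a rescaling one may take $J_1 = H_\gamma$ since the highest symbol of $H_\gamma$ is $p^2 = f_1$.

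Algebraic independence of $J_1, \ldots, J_{2N-1}$ then follows by a symbol argument: any relation $Q(J_1, \ldots, J_{2N-1}) = 0$ among differential operators, on passing to the top weighted-filtered component, yields a polynomial relation $\overline Q(f_1, \ldots, f_{2N-1}) = 0$ in $R^{cl}_N$, forcing $\overline Q = 0$, and an iteration on filtration degree gives $Q = 0$. The main technical point to verify is that the highest symbol of the differential operator $\mathrm{Res}(P_k)$ really equals $f_k$: one must check that the nonlocal reflection terms in the Dunkl operators, the reflection contributions inside the $S_{ij}$, and all commutators among $A_i, L_{jk}, \mathcal H_\gamma$ are strictly subleading in the above filtration. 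This is precisely the symbol computation already performed in the proof of Theorem \ref{PBWth}, where the substitutions $A_i \to A_i^{cl}$, $L_{jk} \to M_{jk}$, $\mathcal H_\gamma \to p^2$, $w \to 1$ produce the highest symbol in $R^{cl}_N$.
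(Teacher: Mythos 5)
Your proposal is correct and follows essentially the same route as the paper: establish that $(R^{cl}_N)^W$ has transcendence degree $2N-1$ via Proposition \ref{trdeg}, quantise a transcendence basis to $W$-invariant elements of $R_{g,\gamma}(W)$, restrict to $W$-invariants, and verify algebraic independence through the highest symbols computed as in Theorem \ref{PBWth}. The only cosmetic differences are that you deduce the invariant transcendence degree from the Galois/fraction-field picture where the paper cites the integrality of $(R^{cl}_N)^W \subset R^{cl}_N$, and you $W$-average a quantisation where the paper Weyl-quantises an already $W$-invariant classical element; both are equivalent.
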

\begin{proof}
The group $W$ acts on the algebra $R^{cl}_N$. Let $(R^{cl}_N)^W$ be the sublgebra of $W$-invariant elements. By Noether's results $(R^{cl}_N)^W$ is a finitely generated domain and the extension $(R^{cl}_N)^W \subset R^{cl}_N$ is integral (see e.g. \cite[Theorem 13.17]{Eisenbud}). It follows by Proposition \ref{trdeg} that transcendence degree of the algebra $(R^{cl}_N)^W$ is equal to $2N-1$. Let $Q_i$, $1\le i \le 2N-1$ be a homogeneous basis of $(R^{cl}_N)^W$. Let us consider the corresponding elements $Q_i^{q}\in R_{g, \gamma}(W)$ obtained by taking Weyl quantisation of $Q_i$, that is given a monomial in $Q_i$ one gets the corresponding sum of monomials in $Q_i^{q}$ by replacing $p^2 \to {\mathcal H}$, $M_{kl} \to {\mathcal L}_{kl}$, $A_i^{cl} \to {\mathcal A}_i$ and averaging over all possible orderings of variables entering the original monomial. It follows that $Q_i^{q}$ is $W$-invariant. Note that the highest order term of $\rho(Q_i^{q})\in D$ is equal to $Q_i$, where homomorphism $\rho$ is defined in Proposition \ref{rhohom}. Therefore elements $\rho(Q_i^{q})$ are algebraically independent. Define differential operators $J_i = \text{Res} \,  \rho(Q_i^{q})$. It is clear that $[H_{\gamma}, J_i]=0$, and differential operators $J_i$ are algebraically independent. Theorem follows.
\end{proof}

\begin{rem}
In the particular case $\gamma=0$ a large set of quantum integrals for $H_0$ was given in \cite[Proposition II.1]{CFS} and superintegrability was claimed although details on algebraic independence of quantum integrals were omitted. 
\end{rem}

Dunkl LRL vector can also be used to establish maximal superintegrability of certain generalisations of Calogero-Moser systems with Coulomb potential which do not have full Coxeter symmetry. At $\gamma=0$ such systems appeared in the works \cite{CFV}, \cite{SV}. At special values of parameters the corresponding Hamiltonians can be obtained via special restrictions of Dunkl operators \cite{Fsel}. Let us recall these settings in  more detail.

Pairs $(\Pi, g)$ where $\Pi=\Pi_V$ is an intersection of some Coxeter reflection hyperplanes in $V$ and $g$ is a special coupling  parameter were determined in \cite[Theorem 3]{Fsel} so that ideal of functions vanishing on the Coxeter group orbit $W(\Pi)$ is invariant under the corresponding rational Cherednik algebra.  This led to quantum integrability of the Hamiltonian
$$
H_\Pi=\Delta_\Pi
-  \sum_{\alpha\in { \mathcal R}_\Pi } \frac{\widehat g_\alpha (\alpha, \alpha)}{(\alpha, x)^2},
$$
where ${ \mathcal R}_\Pi$ consists of orthogonal projections of vectors from $\mathcal R$ to $\Pi$ which are non-zero, $\Delta_\Pi$ is the Laplace operator on $\Pi$, $x \in \Pi$, and $\widehat g_\alpha$ are coupling parameters determined in terms of $g$ \cite[Theorem 5, Proposition 2]{Fsel}. Quantum integrals were obtained by a restriction on $\Pi$ procedure applied to any $W$-invariant combination of Dunkl operators in a suitable gauge. 

In the proof of the next theorem we apply such a restriction to $W$-invariant combinations $Q$ of Dunkl operators which may depend on $x$-variables as well. The restriction works in the same way as in \cite{Fsel}. In particular, the highest term of the resulting quantum integral is obtained by restriction of the highest term of $Q$ on $\Pi_V\times \Pi_{V^*}$, where $\Pi_{V^*}$ is the image of the plane $\Pi_V$ in the momentum space under the natural isomorphism which identifies $(x_1, \ldots, x_N)$ with $(p_1, \ldots, p_N)$.

 
Assuming that pairs $(\Pi, g)$ are as described we have the following statement on maximal superintegrability of the corresponding Hamiltonians $H_\Pi$ as well as these Hamiltonians with extra Coulomb potential.

\begin{thm}\label{maxgen}
The operator $H_{\Pi, \gamma} = H_\Pi +\frac{2 \gamma}{r}$ is maximally superintegrable for any $\gamma \in \C$, where $r=(x,x)^{1/2}$, $x \in \Pi$. 
\end{thm}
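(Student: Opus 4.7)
My plan is to follow the proof of Theorem \ref{maxint}, composing the restriction to $W$-invariants with the further $(\Pi,g)$-restriction of \cite{Fsel}. Set $M=\dim\Pi$. First, note that $H_{\Pi,\gamma}$ arises from $\mathcal{H}_\gamma$ as this two-step restriction: first to $W$-invariants (yielding $H_\gamma=H_{\gamma,g}^{loc}$ as in Remark \ref{remark1}), and then to $\Pi$. The Coulomb term survives unchanged since $\Pi$ passes through the origin, so $r|_\Pi$ equals the intrinsic distance on $\Pi$, and the \cite{Fsel} framework preserves commutators of ideal-compatible operators.

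For any $W$-invariant classical element $Q\in(R^{cl}_N)^W$, Weyl-quantize to $Q^q\in R_{g,\gamma}(W)$, which is again $W$-invariant; then $\rho(Q^q)$ commutes with $\mathcal{H}_\gamma$, so $J_Q:=\text{Res}(\rho(Q^q))$ commutes with $H_\gamma$, and its further $\Pi$-restriction $\tilde J_Q$ commutes with $H_{\Pi,\gamma}$. By the highest-symbol statement recalled immediately before Theorem \ref{maxgen}, the leading symbol of $\tilde J_Q$ on $T^*\Pi$ equals $Q|_{\Pi\times\Pi^*}$. It therefore suffices to exhibit $2M-1$ elements $Q\in(R^{cl}_N)^W$ whose restrictions to $\Pi\times\Pi^*$ are algebraically independent in $\C[T^*\Pi]$.

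Choose an orthonormal basis $e_1,\dots,e_N$ of $V$ with $e_1,\dots,e_M$ spanning $\Pi$. The restriction map $R^{cl}_N\to\C[T^*\Pi]$ sends the generators $p^2$, $M_{ij}$, $A_i^{cl}$ with indices in $\{1,\dots,M\}$ onto the corresponding generators of the analogous $M$-dimensional algebra $R^{cl}_M\subset\C[T^*\Pi]$, and the remaining generators to zero. Hence the image contains $R^{cl}_M$, which has transcendence degree $2M-1$ by Proposition \ref{trdeg} applied in dimension $M$. The setwise stabilizer $W_\Pi\subset W$ acts on $\Pi$ through a finite effective quotient, and the image of $(R^{cl}_N)^W$ lies inside $(R^{cl}_M)^{W_\Pi}$; this invariant subalgebra still has transcendence degree $2M-1$ by the Noether--integrality argument used in the proof of Theorem \ref{maxint}. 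I would then lift a transcendence basis $\tilde Q_1,\dots,\tilde Q_{2M-1}$ of $(R^{cl}_M)^{W_\Pi}$ back to $W$-invariants in $R^{cl}_N$.

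The main obstacle is this last lifting step: the naive $W$-average of a preimage $Q_i^{(0)}\in R^{cl}_N$ of $\tilde Q_i$ restricts to $(|W_\Pi|/|W|)\,\tilde Q_i$ plus spurious contributions from $w(Q_i^{(0)})|_\Pi$ with $w\notin W_\Pi$, which could a priori destroy algebraic independence. I would eliminate these by first multiplying $Q_i^{(0)}$ by a fixed $W_\Pi$-invariant polynomial $f\in\C[V]$ that vanishes on $\bigcup_{w\notin W_\Pi}w(\Pi)$ but not identically on $\Pi$, constructed as $f=\prod_{u\in W_\Pi}u(f_0)$ with $f_0=\prod_{w\notin W_\Pi}\ell_w$, where each $\ell_w$ is a linear form vanishing on $w(\Pi)$ but not on $\Pi$ (such $\ell_w$ exists because $\Pi\not\subset w(\Pi)$ for $w\notin W_\Pi$). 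The $W$-average of $f\cdot Q_i^{(0)}$ then restricts to a nonzero multiple of $f|_\Pi\cdot\tilde Q_i$, and algebraic independence of the $f|_\Pi\cdot\tilde Q_i$ follows from that of the $\tilde Q_i$. Weyl-quantizing the resulting $2M-1$ $W$-invariant classical elements and restricting as in the first two paragraphs produces the required commuting algebraically independent quantum integrals of $H_{\Pi,\gamma}$.
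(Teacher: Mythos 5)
Your overall strategy matches the paper's: restrict classical $W$-invariants from $(R^{cl}_N)^W$ to $\Pi_V\times\Pi_{V^*}$, quantise, and compare highest symbols. But the key step --- producing $2M-1$ algebraically independent elements \emph{in the image} $\pi\bigl((R^{cl}_N)^W\bigr)$ --- is where your argument breaks down. You correctly observe that $\pi(R^{cl}_N)\cong R^{cl}_M$ has transcendence degree $2M-1$ and that the image of the invariants sits inside $(R^{cl}_M)^{W_\Pi}$, but knowing the transcendence degree of the ambient algebra $(R^{cl}_M)^{W_\Pi}$ says nothing about the transcendence degree of the (a priori smaller) image of $(R^{cl}_N)^W$, which is what you actually need. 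Your attempted repair --- lifting a transcendence basis of $(R^{cl}_M)^{W_\Pi}$ by multiplying a preimage $Q_i^{(0)}$ by a polynomial $f\in\C[V]$ vanishing on $\bigcup_{w\notin W_\Pi}w(\Pi)$ and then $W$-averaging --- fails at the quantisation step: $f\cdot Q_i^{(0)}$ is no longer an element of $R^{cl}_N$, since $R^{cl}_N$ is generated only by $p^2$, $M_{jk}$, $A_i^{cl}$ and contains no nonconstant polynomials in $x$ alone. Consequently the $W$-average of $f\cdot Q_i^{(0)}$ cannot be Weyl-quantised via $p^2\to\mathcal H$, $M_{kl}\to\mathcal L_{kl}$, $A_i^{cl}\to\mathcal A_i$ into an element of $R_{g,\gamma}(W)$, and there is no reason for any quantisation of it to commute with $\mathcal H_\gamma$.

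The paper sidesteps the lifting problem entirely with one observation: the integral extension $(R^{cl}_N)^W\subset R^{cl}_N$ descends under the ring homomorphism $\pi$ to an integral extension $\pi\bigl((R^{cl}_N)^W\bigr)\subset\pi(R^{cl}_N)$ (apply $\pi$ to an equation of integral dependence). Since integral extensions of domains preserve transcendence degree, $\pi\bigl((R^{cl}_N)^W\bigr)$ itself has transcendence degree $2d-1$ with $d=\dim\Pi$. One then chooses the $2d-1$ algebraically independent elements $\bar Q_i$ directly inside this image, where preimages $Q_i\in(R^{cl}_N)^W$ exist by definition; quantising those and restricting gives the integrals, with highest terms $\bar Q_i$. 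If you replace your third and fourth paragraphs with this integrality argument, the rest of your proposal goes through.
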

\begin{proof}
Let us consider homomorphism $\pi$ from the ring of polynomials $\C[x_1,\ldots, x_N, p_1,\ldots, p_N]$ to the ring of polynomials on $\Pi_V\times \Pi_{V^*}$ given by natural restriction. Then $\pi(R^{cl}_N)$ is isomorphic to the algebra $R^{cl}_d$, where $d=\dim \Pi$, which can be seen by taking coordinates in such a way that plane $\Pi=\Pi_V$ has equations   $x_N=\ldots=x_{d+1}=0$, and similarly in $p$-space. Therefore transcendence degree of $\pi(R^{cl}_N)$ is equal to $2d-1$ by Proposition \ref{trdeg}. Note that the integral extension $(R^{cl}_N)^W \subset R^{cl}_N$ leads to the integral extension $\pi ((R^{cl}_N)^W) \subset \pi(R^{cl}_N)$, hence transcendence degree of $\pi ((R^{cl}_N)^W)$ is equal to $2d-1$.
For any element $\bar Q\in \pi ((R^{cl}_N)^W)$ let us choose $Q\in (R^{cl}_N)^W$ such that $\pi(Q) =\bar Q$. Consider the quantisation $Q^q$ of $Q$ as in the proof of Theorem \ref{maxint}. Differential operator $\text{Res} \,  \rho(Q^{q})$ commutes with the Hamiltonian $H_\gamma$, and it leads to the quantum integral for $H_{\Pi, \gamma}$ with the highest term $\bar Q$ by the results from \cite{Fsel}. The statement follows. 
\end{proof}

As an example Theorem  \ref{maxgen} establishes maximal superintegrability of the following Hamiltonian describing generalization of Calogero-Moser system to the case of two sets of particles of any sizes $n,m \in\N$ with coordinates $(x_1, \ldots, x_n)$ and $(y_1, \ldots, y_m)$:
$$
H_{n,m} = \sum_{i=1}^{n} \frac{\partial^2}{\partial x_i^2} +  \sum_{j=1}^{m} \frac{\partial^2}{\partial y_j^2} - \sum_{i_1<i_2}^n \frac{2 k (k+1)}{(x_{i_1}-x_{i_2})^2}
- \sum_{j_1<j_2}^m \frac{2 k^{-1} (k^{-1}+1)}{(y_{j_1}-y_{j_2})^2}
-\sum_{i=1}^n \sum_{j=1}^m \frac{2 (k+1)}{(x_{i}-\sqrt{k}y_j)^2} +\frac{2\gamma}{r},
$$
where $r= (\sum_{i=1}^n x_i^2 + \sum_{j=1}^m y_j^2)^{1/2}$, $\gamma \in \C$ and $k \in \N$. Integrability of the Hamiltonian $H_{n,m}$ at $\gamma=0$ and arbitrary $k\in \C^\times$ was established in \cite{CFV} for $m=1$ or $n=1$, and in \cite{SV} in general.

As another example Theorem  \ref{maxgen} establishes maximal superintegrability of the following three-dimensional Hamiltonian:
$$
H=\Delta  
-  \sum_{i=1}^{3} \frac{(4c+1)(4c+3)}{4 x_i^2} -  
 \sum_{\nad{i<j}{\varepsilon_1\in \{1, -1\}}}^{3} \frac{2 c(c+1)}{(x_i +\varepsilon_1 x_j)^2} -   \sum_{\varepsilon_{2,3}\in \{1, -1\} } \frac{6}{(x_1 + \varepsilon_2 x_2 +  \varepsilon_3 x_3)^2} +\frac{2\gamma}{(\sum_{i=1}^3 x_i^2)^{1/2}}, 
$$
where $\Delta = \sum_{i=1}^{3} \frac{\partial^2}{\partial x_i^2}$ and $c, \gamma \in \C$. Integrability of this Hamiltonian was established in \cite{Fsel} for $\gamma=0$.

We refer to \cite{Fsel} for further examples.

\section{Concluding remarks}

In the case of Coulomb problem LRL vector allows to derive spectrum \cite{Pauli} and study scattering \cite{Z}. It would be interesting to try to extend this analysis to Hamiltonians considered in this paper. Development of representation theory of algebras $R_{g, \gamma}(W), H_g^{so(N+1)}(W)$ may be needed which is an interesting direction on its own.

Another open question is on maximal superintegrability of generalised Calogero--Moser systems without full Coxeter symmetry such as Hamiltonian $H_{n,m}$ in the case of general coupling parameter(s), that is parameter $k$ in the case of $H_{n,m}$.

\section*{Acknowledgments}
T.H. is grateful to A. Nersessian for preliminary collaboration and stimulating discussions. 
We are grateful to S. Dubovsky, G. Felder and A.P. Veselov for useful comments. 
The work of T.H. was supported by the Armenian State Committee
of Science Grants No.  18T-1C106, 18RF-002, SFU-02
and was fulfilled within the ICTP Affiliated Center Program AF-04.
\color{black}

\end{document}